\documentclass[final,11pt]{amsart}
\usepackage[T1]{fontenc}
\usepackage{a4wide,mathtools,bbm,amsthm,amsmath,amsfonts}
\usepackage{amssymb,color,epic,multirow,comment}
\usepackage[noend]{algpseudocode}
\usepackage[ruled]{algorithm2e}
\usepackage{graphicx}
\usepackage{subcaption}
\usepackage[export]{adjustbox}
\usepackage{tikz,pgfplots}
\usepackage{orcidlink}
\usepackage{float}
\usepackage{booktabs}
\usepackage{tikz-cd, comment}
\usepgfplotslibrary{groupplots}
\pgfplotsset{compat=newest}
\usetikzlibrary{arrows,decorations,backgrounds,positioning,arrows.meta, decorations.markings}
\usepgfplotslibrary{colorbrewer}

\mathtoolsset{showonlyrefs}
\overfullrule10pt
\theoremstyle{plain}

\newtheorem{theorem}{Theorem}[section]

\newtheorem{proposition}[theorem]{Proposition}

\newtheorem{definition}[theorem]{Definition}
\newtheorem{remark}[theorem]{Remark}


\def\Letters{A,B,C,D,E,F,G,H,I,J,K,L,M,N,O,P,Q,R,S,T,U,V,W,X,Y,Z}
\makeatletter
\@for \@l:=\Letters \do{%
  \expandafter\edef\csname\@l bb\endcsname{%
  \noexpand\ensuremath{\noexpand\mathbb{\@l}}}%
  \expandafter\edef\csname\@l bf\endcsname{{\noexpand\bf \@l}}%
  \expandafter\edef\csname\@l cal\endcsname{%
  \noexpand\ensuremath{\noexpand\mathcal{\@l}}}%
  \expandafter\edef\csname\@l eu\endcsname{%
  \noexpand\ensuremath{\noexpand\EuScript{\@l}}}%
  \expandafter\edef\csname\@l frak\endcsname{%
  \noexpand\ensuremath{\noexpand\mathfrak{\@l}}}%
  \expandafter\edef\csname\@l rm\endcsname{{\noexpand\rm \@l}}%
  \expandafter\edef\csname\@l scr\endcsname{%
  \noexpand\ensuremath{\noexpand\mathscr{\@l}}}%
}
\newcommand{\bs}[1]{{\boldsymbol#1}}

\newcommand{\isdef}{\mathrel{\mathrel{\mathop:}=}}
\newcommand{\defis}{\mathrel{=\mathrel{\mathop:}}}

\definecolor{navy}{RGB}{102,153,255}
\definecolor{tuerkis}{RGB}{51,153,204}
\algrenewcommand\alglinenumber[1]{\ding{\numexpr191 + #1}}

\title[Bespoke multiresolution analysis of graph signals]
{Bespoke multiresolution analysis of graph signals}
\author{Giacomo Elefante\orcidlink{0000-0001-5576-6802}, Gianluca Giacchi\orcidlink{0000-0002-6809-1311}, Michael Multerer\orcidlink{0000-0003-0170-0239}, and Jacopo Quizi\orcidlink{0009-0001-9199-2812}}
\address{
IDSIA USI-SUPSI,
Universit{\`a} della Svizzera italiana,
Via la Santa 1, 6962 Lugano, Svizzera.}
\email{\{giacomo.elefante,gianluca.giacchi,michael.multerer,jacopo.quizi\}@usi.ch}
\begin{document}
\begin{abstract}
We present a novel framework for discrete multiresolution analysis of graph
signals. The main analytical tool is the samplet transform, originally defined
in the Euclidean framework as a discrete wavelet-like construction, tailored to
the analysis of scattered data. The first contribution of this work is defining
samplets on graphs. To this end, we subdivide the graph into a fixed number of
patches, embed each patch into a Euclidean space, where we construct samplets,
and eventually pull the construction back to the graph.
This ensures orthogonality, locality, and the vanishing moments property with
respect to properly defined polynomial spaces on graphs. Compared to classical
Haar wavelets, this framework broadens the class of graph signals that can
efficiently be compressed and analyzed. Along this line, we provide a
definition of a class of signals that can be compressed using our construction.
We support our findings with different examples of signals defined on graphs
whose vertices lie on smooth manifolds. For
efficient numerical implementation, we combine heavy edge clustering, to
partition the graph into meaningful patches, with landmark \texttt{Isomap},
which provides low-dimensional embeddings for each patch. Our results
demonstrate the method's robustness, scalability, and ability to yield
sparse representations with controllable approximation error, significantly
outperforming traditional Haar wavelet approaches in terms of compression
efficiency and multiresolution fidelity.
\end{abstract}

\maketitle

\section{Introduction}\label{sec:introduction}
Due to their effectiveness in structuring data, encoding and visualizing data,
graphs play a fundamental role in modern signal analysis, cp.\ \cite{SNF13}. 
To mention a few, they find applications in social networks, see
\cite{IMG19-sociology}, medical signal processing, see
\cite{GWH22-biology,HOD21-biology}, neuroscience, see \cite{BS23,DMC24},
computer engineering \cite{LYL25,Xu2023}, acoustics, see \cite{BRZ23},
and transportation, see \cite{MAMDJ14,NOH18,WXZ19}. As a consequence, the
analysis of graph signals is a highly relevant task.
Multiresolution analysis has always been a core tool in signal analysis because
of its capability of analyzing signals at different levels of resolutions,
providing valuable insights in their space and frequency structure.
The main tool of multiresolution analysis is the {\em wavelet transform}, see,
e.g., \cite{Mallat89} and the references therein. 

In the last decades, several
authors have been working on adaptations of the wavelet transform on graphs,
focusing on data embedded in high dimensional spaces, see \cite{CM10,LNW08}.
{\em Diffusion wavelets} rely on diffusion operators to generate orthonormal
scaling functions and wavelets for data compression and denoising on graphs,
cf. \cite{HM_CM06}. Similarly, {\em Spectral graph wavelets} introduced in
\cite{HVG11} are constructed by using the eigenfunctions of the graph
Laplacian. Further constructions are based on partitioning trees, exemplified
by the Haar-like wavelets on hierarchical trees proposed in \cite{AW24,gavish}.
In \cite{REC11}, this idea is extended by introducing data-adaptive orthonormal
wavelets on trees. A related extension for graph-based data is the construction
of {\em wedgelets} in \cite{Erb23}, which employ adaptive greedy partitioning
of graphs.
Recently, {\em samplets} emerged as a novel approach for multiresolution
analysis of scattered data, mimicking the role of wavelets in this unstructured framework, cf. \cite{HM22}. Samplets are localized, discrete signed measures
exhibiting vanishing moments and can, therefore, be constructed on general data
sets. So far, samplet theory has been developed in a Euclidean framework, with applications to compressed sensing \cite{BHM24}, deepfake recognition \cite{HVM25}, kernel learning \cite{AMW24,HMSS24} and local singularities detection \cite{avesani2025multiresolution}. 

In this article, we extend the samplet construction to graphs. To this end, we
first partition the graph into several patches, each of which we assume to
correspond to the discretization of some Riemannian manifold. Subsequently,
we embed each patch into a Euclidean space, where samplets are constructed,
and finally pulled back onto the corresponding patches of the graph.
This way, we result in a {\em samplet forest} for the analysis of graph
signals. The patch-wise approach is crucial. In comparison with the classical 
Haar-wavelet construction on graphs, which is invariant under rotations of the
parametric domain, samplets exhibit higher order vanishing moments. The
corresponding polynomials are not invariant under rotations and therefore
require the coordinate system to be fixed in advance. The result of this
construction is a wavelet-like basis tailored to the underlying graph, which
exhibits much stronger compression capabilities compared to Haar-wavelets.

In particular, we define classes of graph signals that can efficiently be
compressed using samplets. These classes describe graph signals that can
locally be approximated by {\em generalized polynomials}. The classes
resemble Jaffard's microlocal spaces, see \cite{jaffard1991pointwise},
within the discrete graph framework. We provide a proof of the compressibility
of such signals represented in a samplet basis.
For the efficient numerical implementation, we combine heavy edge clustering, 
see \cite{KK97,KK98} to partition the graph into meaningful patches, with
landmark \texttt{Isomap}, see \cite{MS02, ST02,Torgeson52},
which provides low-dimensional embeddings for each patch.

We support our findings with different instances of graph signal analysis.
Concretely, we consider signals defined on nearest neighbor graphs of a
unit square embedded into high dimension, of the Swissroll manifold
and of the Stanford bunny.

The paper is organized as follows. We first give, in Section~\ref{sec:graph}, a
definition of the graph framework we will be working with and define the
microlocal spaces of compressible functions. In Section~\ref{sec:samplets},
we locally define the samplets in the coordinate domains, which are then pulled
back to assemble a samplet forest. In Section~\ref{sec:sampletgraphs},
we prove that signals in microlocal spaces have decaying samplet coefficients,
where the rate of decay is limited by the number of vanishing moments.
Moreover, we discuss two different strategies for signal compression. 
For the approximation of coordinate maps, we consider multidimensional
scaling, and provide corresponding consistency error bounds in
Section~\ref{sec:embedding}.
Then, in Section~\ref{sec:numerics}, we report numerical results to support
the efficiency of the method proposed. Finally, we draw a conclusion in 
Section~\ref{sec:conclusion}.

\section{Graphs and microlocal spaces}\label{sec:graph}
In this section, we provide basic definitions and define smoothness classes
of graphs by transferring the concept of Jaffard's microlocal spaces to the
graph framework.
\subsection{Setting}\label{ssec:Setting}
Let \(G=(V,E,w)\) denote a weighted graph with vertices
\(V=\{v_1,\ldots,v_N\}\), edges \( E \subseteq \{(v_i, v_j)
\in V \times V : i\neq j\}\) and weight function
\(w\colon E \to[0,\infty)\). The weight function 
represents pairwise distances between vertices. In this regard, the graph
can be considered as a discretization of some (unknown) topological 
manifold \(\Mcal\). 
We assume that \(G\) can be decomposed into several
subgraphs \(G_1,\ldots,G_p\). These subgraphs correspond to a subdivision of
\(\Mcal\) into smooth patches \(\Ucal_1,\ldots,\Ucal_p\) such that
each of these patches is a Riemannian manifold of dimension $q$
with a unique chart \((\Ucal_r,\phi_r)\), for \(r=1,\ldots,p\),
where \(\phi_r\colon\mathcal{U}_r \to \phi_r(\mathcal{U}_r)\subseteq\mathbb{R}^q\) is the coordinate map.
In this sense, the edge
weights of $G$ can be interpreted as approximations of the
geodesic distances between the corresponding points on each patch 
of \(\Mcal\).
\subsection{Microlocal spaces on graphs}\label{subsectionEF}
With a slight abuse of notation, we shall identify
$\mathcal{U}_r = \{{v}_1, \ldots, {v}_{N_r}\}$.
The composition of a monomial with the coordinate map $\phi_r$ naturally
induces a definition of monomials on $\mathcal{U}_r$. More precisely, given a
monomial ${\bs x}^{\bs\alpha}$ on $\mathbb{R}^q$, we define the corresponding
function on $\mathcal{U}_r$ as
\begin{equation}
{\bs X}^{\bs\alpha}\isdef {\bs x}^{\bs\alpha}\circ \phi_r,
\end{equation}
see Figure \ref{Fig:Monomials} for a visual reference.
\begin{figure}[htb]
\centering
\resizebox{0.65\textwidth}{!}{
\begin{tikzpicture}[every node/.style={font=\small}]
 
\shade[ball color=blue!30, opacity=0.6] 
  (1,1.2) .. controls (2,2.8) and (4.5,2.2) .. (5.5,1.5) 
  .. controls (5.7,0.5) and (4,-0.5) .. (2,-0.2) 
  .. controls (1.2,0.2) and (1,0.6) .. (1,1.2);
 \node at (1,1.8) {\(G\)};
 
\draw[dashed, rounded corners=10pt, line width=0.8pt] (2.2,0.7) .. controls (3,1.5) and (3.5,1.2) .. (3.6,0.7) .. controls (3.5,0.2) and (2.5,0.2) .. (2.2,0.7);
\node at (2,1) {\(\mathcal{U}_r\)};
 
\foreach \x/\y in {2.7/0.7, 3/0.9, 3.3/0.8, 2.9/0.6, 3.1/0.5}
{
    \fill (\x,\y) circle (0.04);
}
 
\foreach \x/\y in {2.2/-2, 3.8/-1.9, 2.3/-3, 3/-2.7, 3.6/-2.5} {
    \fill (\x,\y) circle (0.04);
}
 
\draw[->] (2.9,0.1) -- (2.9,-1.3) node[midway, left] {\(\phi_r\)};
 
\draw[dashed, line width=0.8pt] (2,-3.5) rectangle (4,-1.5);
\node at (1,-3.5) {\(\phi_r(\mathcal{U}_r)=\mathbb{R}^q\)};
 
\draw[->, bend left=30] (4,1.3) to node[above right] {\({\bs X}^\alpha\)} (8,1.3);
\draw[->, bend right=30] (4.3,-2.5) to node[below right] {\({\bs x}^\alpha\)} (8.3,1);
 
\node at (8.3,1.3) {\(\mathbb{R}\)};
 
\end{tikzpicture}
}
\caption{A schematic representation of the definition of the monomials ${\bs X}^\alpha$.}
\label{Fig:Monomials}
\end{figure}
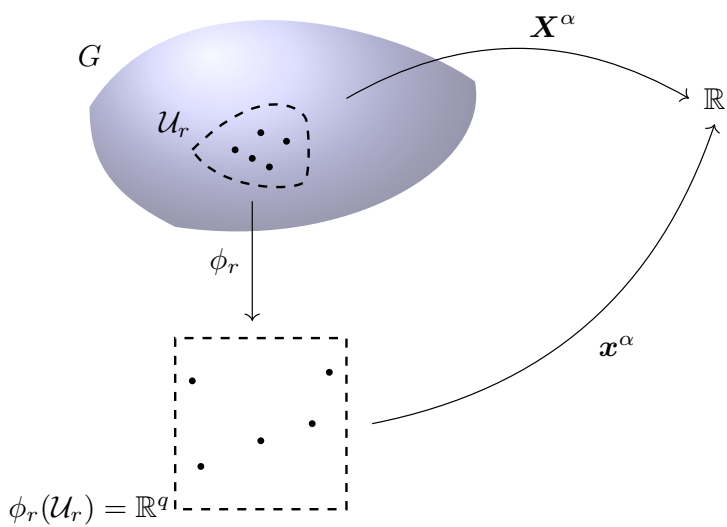

Now given ${v}_i \in \Ucal_r$ we introduce the evaluation functional
\begin{equation}
    \langle \delta_{{v}_i}, {{\bs X}}^{\bs\alpha} \rangle\isdef \phi_r({v}_i)^{\bs\alpha},
\end{equation}

Next, we use the manifold structure constructed on the graph to define spaces of
\emph{locally regular} functions on the graph. We mimic the definition of Jaffard's microlocal spaces \cite{jaffard1991pointwise}, where the emphasis is on the local approximation of a signal through polynomials of a fixed degree. 
Let $\Omega\subseteq\mathbb{R}^q$ be a domain, $\bs x_0\in\Omega$ and $\gamma\geq0$. A function $f\colon\Omega\to \mathbb{R}$ is in the class $C^\gamma(\bs x_0)$ if there exists $R>0$ and a polynomial $P$ of degree $\lfloor\gamma\rfloor$, so that
\begin{equation}\label{EuclJaff}
     |f(\bs x)-P(\bs x -\bs x_0)|\leq C \|\bs x-\bs x_0\|^\gamma
\end{equation}
holds for every $\bs x\in B_R(\bs x_0)$ and a constant \(C>0\). Here, \(\|\cdot\|\) denotes the Euclidean norm.
We want to discuss a similar definition for a graph signal $f\colon G\to \mathbb{R}$ in our
framework. One may be tempted to define $f\in C^\gamma_G(v_0)$, where the lower-script $G$ denotes that $f$ is defined on a graph $G$, and $v_0\in\Ucal_r$ is a fixed vertex of $G$, if $f\circ\phi_r^{-1}\in C^\gamma\big(\phi_r(v_0)\big)$. However, $f$ is only defined on $G$ and therefore $f\circ \phi_r^{-1}$ is not defined on any ball around $\phi_r(v_0)$. Moreover, translations are not naturally defined on $G$, and this complicates the replacement of $P(\bs x-\bs x_0)$ in \eqref{EuclJaff}. 
Even so, in order to define sparsity classes for graph signals, we require a notion of
smoothness classes playing the same role as Jaffard's spaces in the Euclidean framework. Precisely, the microlocal spaces $C^\gamma$ are used to obtain decay estimates for the \emph{samplet coefficients} of functions defined on domains of $\mathbb{R}^d$, see \cite[Theorem 3.1]{avesani2025multiresolution}. We define spaces of functions on $G$ to obtain analogous decay estimates for graph signals as follows.

\begin{definition}\label{defCalpha}
   Let $G=(V,E,w)$ be a graph and $v_0\in V$. Let $\gamma\geq0$, $C>0$, and $(\mathcal{U}_r,\phi_r)$ 
   be the unique chart containing $v_0$. We write $f\in C^\gamma_G(C,v_0)$ if there exists a sequence of real coefficients $(c_{\bs\beta})_{|\bs\beta|\leq\lfloor\gamma\rfloor}$, $\sum_{|\bs\beta|=\lfloor\gamma\rfloor}|c_{\bs\beta}|\neq0$, such that
   \begin{equation}\label{DefCaG}
       \bigg| f(v) -\!\!\sum_{|\bs \beta|\leq\lfloor\gamma\rfloor}\!\! c_{\bs\beta} \big(\phi_r(v)-\phi_r(v_{0})\big)^{\bs\beta}\bigg|\leq C d_G(v,v_0)^\gamma
   \end{equation}
   for every $v\in\mathcal{U}_r$, where $d_G$ denotes the graph distance.
\end{definition}

We remark that Definition~\ref{defCalpha} is a more general and localized version of the 
\((C,\gamma)\)-H\"older classes for \(0<\gamma<1\) introduced in \cite{gavish}
to general \(\gamma\geq 0\).
\section{Construction of samplets on graphs}\label{sec:samplets}
Different from the original construction of samplets in \cite{HM22}, which
employs a single tree structure for the construction of the multiresolution
hierarchy, we consider here an ensemble of trees resulting in a samplet
forest. More precisely, given a decomposition of the underlying graph into
several patches, we construct a samplet tree for each of these patches.
The idea is to construct multiresolution hierarchies and samplets 
on the co-domains \(\phi_r(\Ucal_r)\) and then pull the construction back 
to the graph using \(\phi_r^{-1}\).

The first step is to introduce a hierarchical structure on the co-domain
\(\phi_r(\Ucal_r)\) of each patch. This hierarchy may then be pulled back
to \(G\) using \(\phi_r^{-1}\).
We remark that it is well known that the resulting hierarchy on \(G\) naturally induces a multiresolution analysis with an associated Haar wavelet basis. 
We refer the reader to \cite{DD97,LNW08,M07}. A prototypical example of 
this construction can be found in \cite{gavish} and we remark that lowest
order samplets resemble the Haar wavelets on trees considered there.
We base our construction of the multilevel hierarchy on the concept of a \emph{cluster tree} for each patch \(\Ucal_r\), \(r=1,\ldots,p\).

\begin{definition}\label{def:cluster-tree}
Let \(X\subset\Rbb^q\) and let 
$\mathcal{T}=(V,E)$ be a tree with vertices $V$ 
and edges $E$.
We define its set of leaves as
\(
\mathcal{L}(\mathcal{T})\isdef\{\tau\in V\colon\tau~\text{has no children}\}.
\)
The tree $\mathcal{T}$ is a \emph{cluster tree} for
\(X\), if
      \(X\) is the {root} of $\mathcal{T}$ and
all $\tau\in V\setminus\mathcal{L}(\mathcal{T})$
are disjoint unions of their children.
The \emph{level} \(j_\tau\) of $\tau\in\mathcal{T}$ is its distance from
the root and the bounding box \( B_\tau \) is the smallest axis-parallel cuboid that contains all points of \( \tau \). The depth of the cluster tree is given by \( J \isdef \max_{\tau \in \Tcal} j_\tau\).
\end{definition}

For the construction of the binary tree we adopt geometric clustering by successively subdividing the bounding box of the embedded patch along the longest axis at the midpoint.

Next, we associate a cluster tree \(\Tcal_r\) to each co-domain $\phi(\Ucal_r)$, 
\(r=1,\ldots,p\) and introduce a \emph{two-scale} transform between basis 
elements associated to a cluster $\tau\in\Tcal_r$ of level
$j$ and its child clusters on level $j+1$. 
To this end, we define \emph{scaling distributions}
$\mathbf{\Phi}_{j}^{\tau} = \{ \varphi_{j,i}^{\tau} \}_i$ and
\emph{samplets} $\mathbf{\Psi}_{j}^{\tau} = \{\psi_{j,i}^{\tau} \}_i$
as linear combinations of the scaling 
distributions $\mathbf{\Phi}_{j+1}^{\tau}$ of $\tau$'s child clusters. 
By denoting the number of elements by \(n_{j+1}^\tau
\isdef|\mathbf{\Phi}_{j+1}^{\tau}|\), this results in the
\emph{refinement relations}
\begin{equation}
\varphi_{j,i}^{\tau}
=\sum_{\ell=1}^{n_{j+1}^\tau}q_{j,\Phi,\ell,i}^{\tau}\varphi_{j+1,\ell}^{\tau}
\end{equation}
and
\begin{equation}
\psi_{j,i}^{\tau}
=\sum_{\ell=1}^{n_{j+1}^\tau}q_{j,\Psi,\ell,i}^{\tau}
\varphi_{j+1,\ell}^{\tau},
\end{equation}
for certain coefficients \(q_{j,\Phi,\ell,i}^{\tau}
=[{\bs Q}_{j,\Phi}^{\tau}]_{\ell,i}\) and
\(q_{j,\Psi,\ell,i}^{\tau}=[{\bs Q}_{j,\Psi}^{\tau}]_{\ell,i} \).
These relations
may be written in matrix notation as
\begin{equation}\label{eq:refinementRelation}
\begin{bmatrix}\mathbf{\Phi}_{j}^{\tau}, \mathbf{\Psi}_{j}^{\tau}
\end{bmatrix}
 \isdef 
 \mathbf{\Phi}_{j+1}^{\tau}{\bs Q}_j^{\tau}
=
 \mathbf{\Phi}_{j+1}^{\tau}\begin{bmatrix} {\bs Q}_{j,\Phi}^{\tau},{\bs Q}_{j,\Psi}^{\tau}
\end{bmatrix}.
\end{equation}

Based on Section \ref{subsectionEF}, we define the moment matrix 
${\bs M}_j^\tau \in \mathbb{R}^{m_s} \times n_j^\tau$ as
the matrix with entries
\begin{equation}\label{eq:mom_mat_new}
    {\bs M}_j^\tau\isdef [\langle \delta_{{v}_i},{\bs X}^{\bs\alpha}\rangle]_{\bs\alpha, i} = \left[\phi({v}_i)^{\bs\alpha}\right]_{\bs\alpha, i},
\end{equation}
where \(m_s={s+q\choose q}\) is the dimension of the space of all
polynomials defined on \(\Rbb^q\) with degree \(|\bs\alpha|\leq s\).
Similar to the construction of samplets 
in the Euclidean space, see \cite{HM22},
we employ the QR decomposition of the moment matrix to
obtain filter coefficients for samplets with vanishing moments of order \(s+1\). By letting
\begin{equation}\label{eq:QR} 
  ({\bs M}_{j+1}^{\tau})^\intercal  = {\bs Q}_j^\tau{\bs R}
  \defis\big[{\bs Q}_{j,\Phi}^{\tau} ,
  {\bs Q}_{j,\Psi}^{\tau}\big]{\bs R},
 \end{equation}
the moment matrix 
for the cluster's own scaling distributions and samplets is given by 
\begin{equation}\label{eq:vanishingMomentsQR}
  \big[{\bs M}_{j,\Phi}^{\tau}, {\bs M}_{j,\Psi}^{\tau}\big]
= {\bs M}_{j+1}^{\tau} [{\bs Q}_{j,\Phi}^{\tau} , {\bs Q}_{j,\Psi}^{\tau} ]
  = {\bs R}^\intercal.
\end{equation}
Since ${\bs R}^\intercal$ is a lower triangular matrix, the first $k-1$ 
entries in its $k$-th column are zero. This corresponds to 
$k-1$ vanishing moments for the $k$-th distribution generated 
by the transformation
${\bs Q}_{j}^{\tau}=[{\bs Q}_{j,\Phi}^{\tau} , {\bs Q}_{j,\Psi}^{\tau} ]$. 
By defining the first $m_{s}$ distributions as scaling distributions and 
the remaining ones as samplets, we obtain samplets with vanishing moments of
order \(s+1\), i.e.,
\[
\langle\psi_{j,i}^\tau,{\bs X}^{\bs\alpha}\rangle=0\quad
\text{for }|\bs\alpha|\leq s.
\]

For leaf clusters of \(\Tcal_r\), we define the scaling distributions
by the Dirac measures supported at the points ${\bs\zeta_i} \in \phi_r(\Ucal_r) $, i.e., $\bs\Phi^\tau_J\isdef \{\delta_{\bs \zeta_i} : {\bs \zeta}_i=\phi_r(v_i), {v}_i \in \Ucal_r\}$, where $\phi_r(v_i)$ is defined as in Subsection~\ref{ssec:Setting}. Then collecting the samplets of all levels
together with the scaling distributions of the root cluster yields a 
samplet basis for \(\phi_r(\Ucal_r)\) and, by pulling it back via 
\(\phi_r^{-1}\) for \(\Ucal_r\), respectively. Each such samplet basis
gives rise to an orthogonal transformation matrix 
\({\bs T}_r\in\Rbb^{N_r\times N_r}\). Collecting these matrices in the
block-diagonal matrix \({\bs T}=\operatorname{diag}({\bs T}_1,\ldots,{\bs T}_p)\in\Rbb^{N\times N}\) gives rise to the orthogonal transformation matrix
for \(G\). We remark that the samplet transform \({\bs T}\) is usually not
constructed explicitly but rather applied by recursion, resulting in a
computational cost of \(\Ocal(N)\), see \cite{HM22} for details.
A visualization of samplets with four vanishing moments constructed on a patch 
of an \(\varepsilon\)-nearest neighbors graph of the Stanford bunny is shown
in Figure~\ref{fig:BunnySamplets}.

\begin{figure}[htb]
\centering
\begin{tikzpicture}
  \draw(0.5,0) node{\includegraphics[scale = 0.04,clip,trim = 700 200 750 350]{./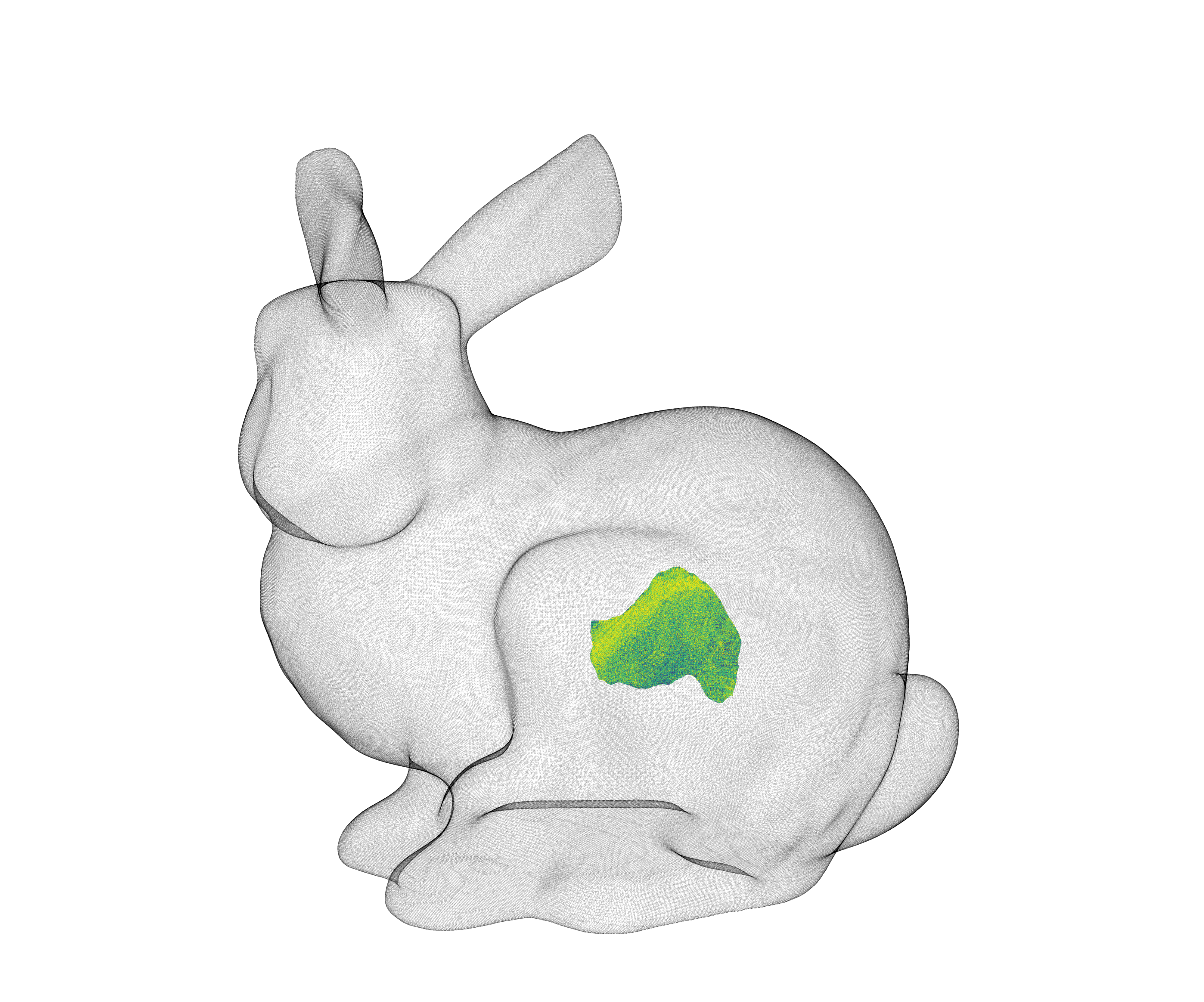}};
  \draw(-1.5,-3) node{\includegraphics[scale = 0.02,trim = 700 200 400 350, clip]{./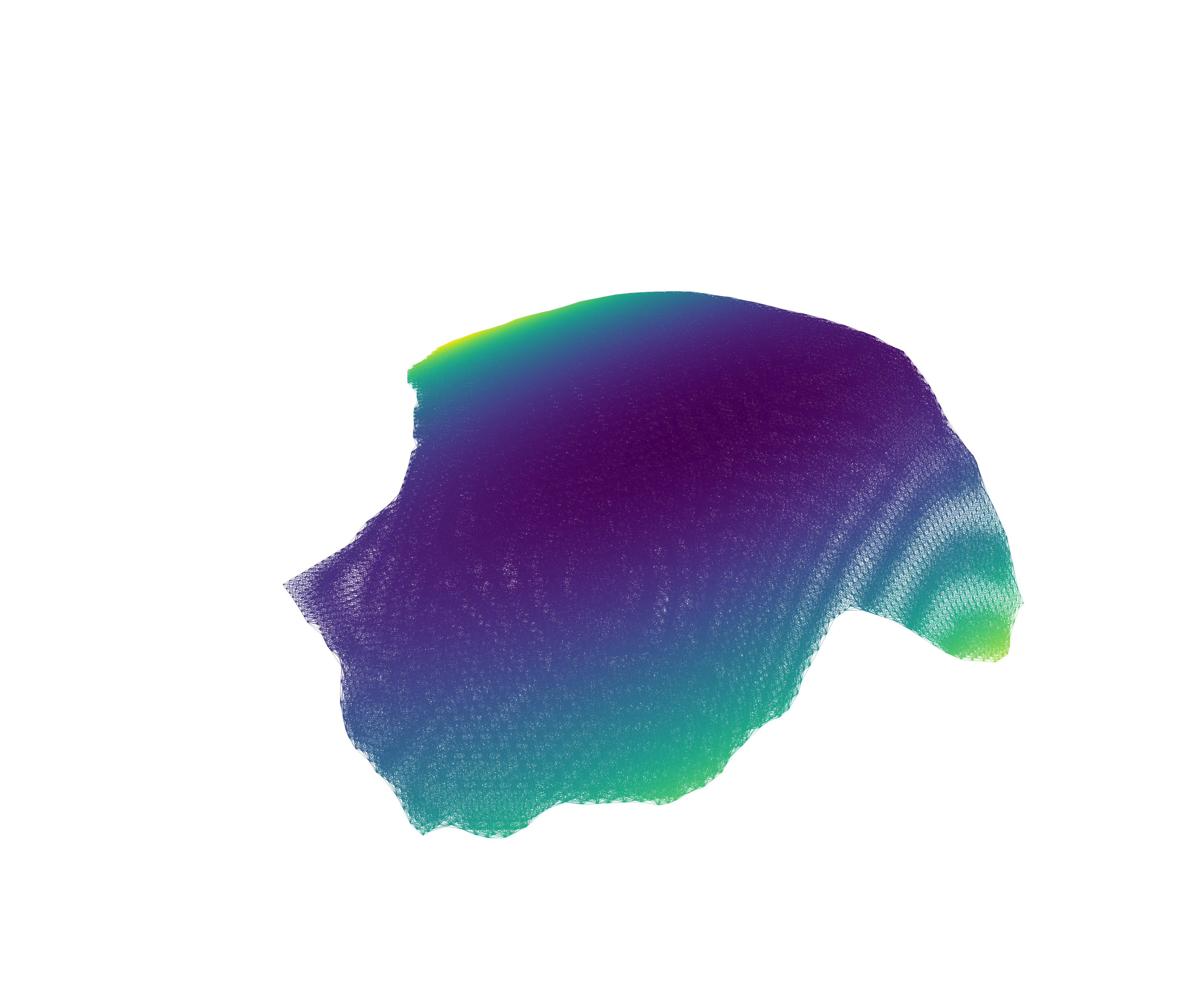}};
    \draw(2.5,-3) node{\includegraphics[scale = 0.02,trim = 700 200 400 350, clip]{./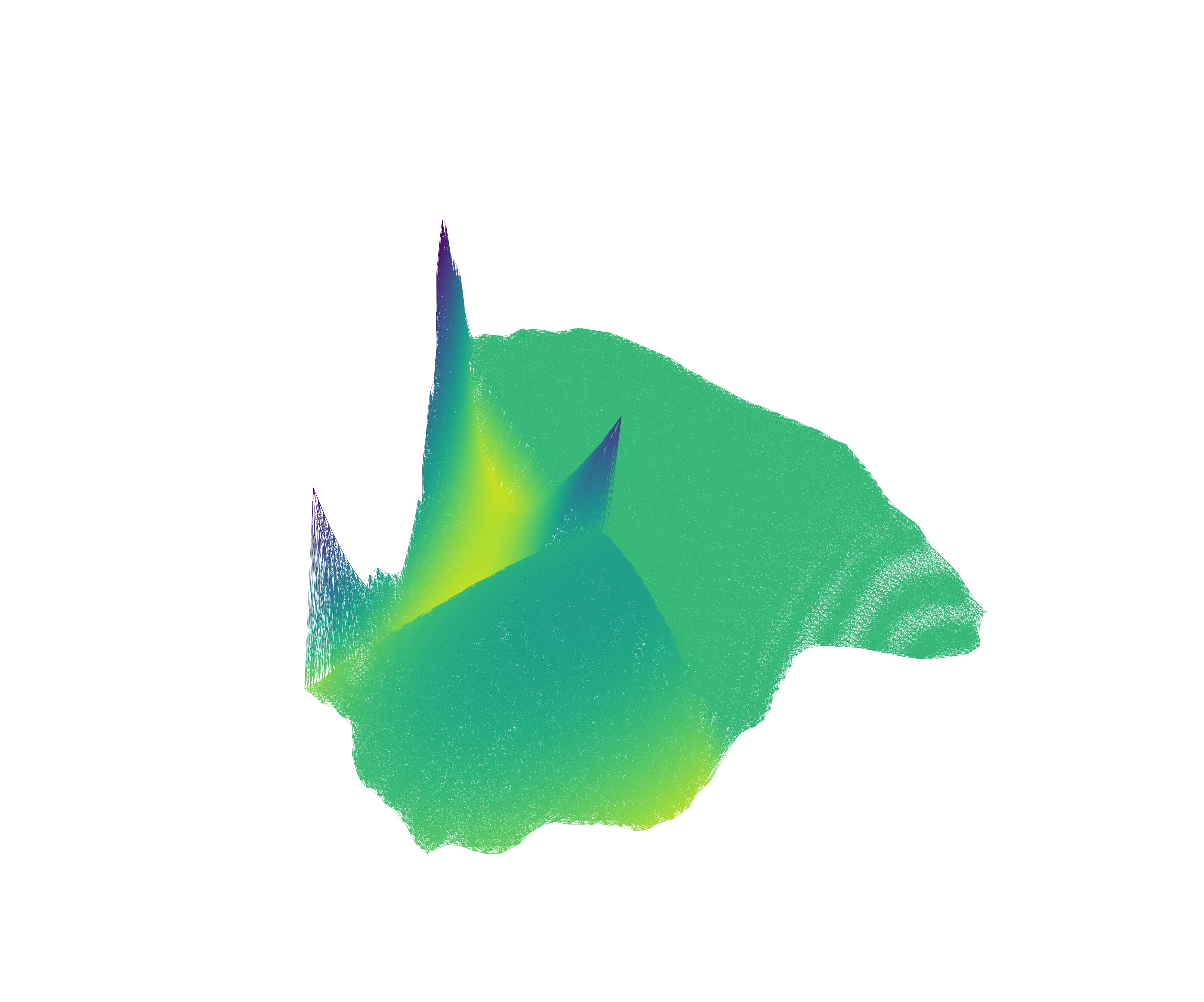}};
    \draw(-1.5,-5) node{\includegraphics[scale = 0.02,clip,trim = 700 200 400 350]{./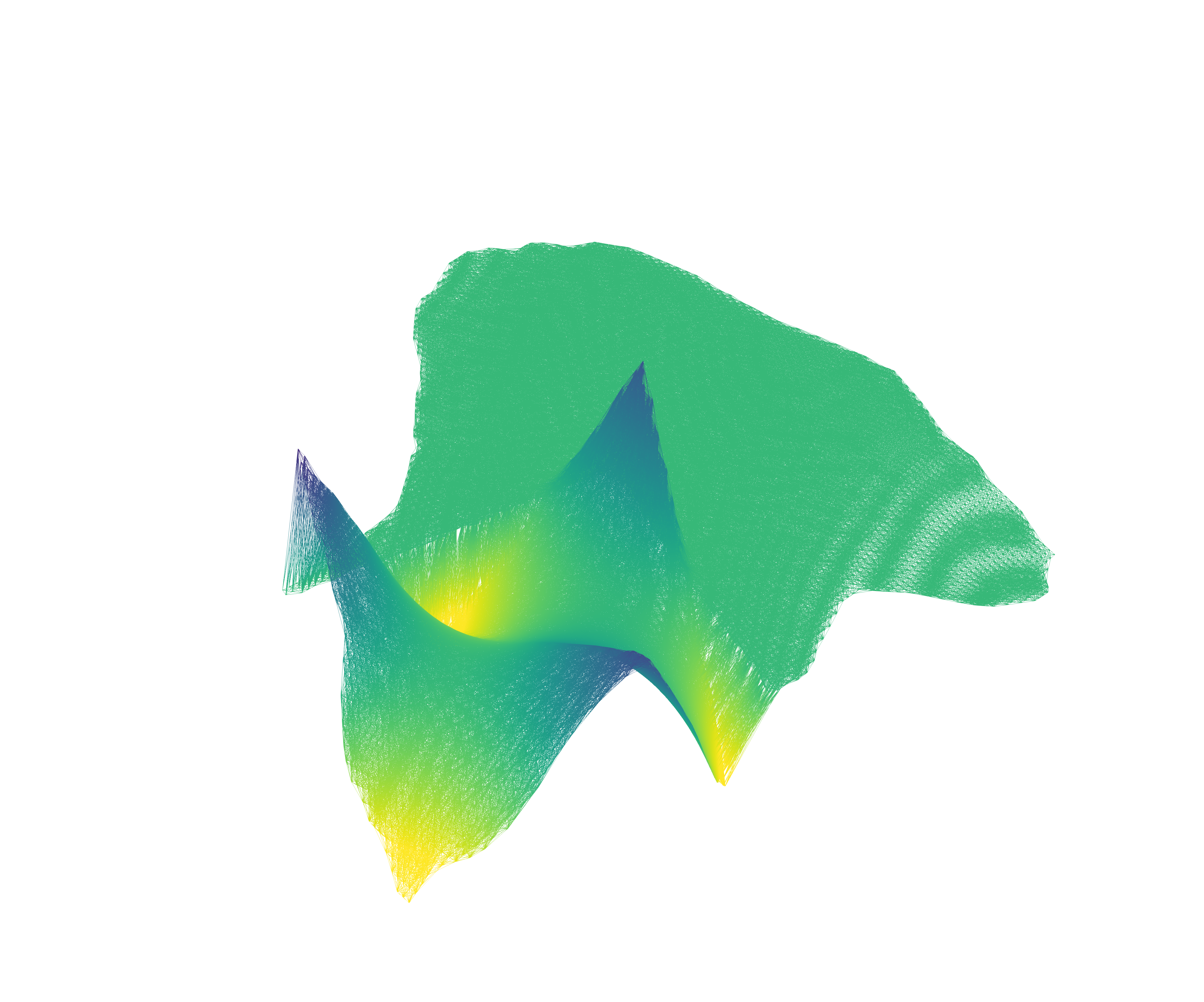}};
    \draw(2.5,-5) node{\includegraphics[scale = 0.02 ,clip,trim = 100 100 100 100]{./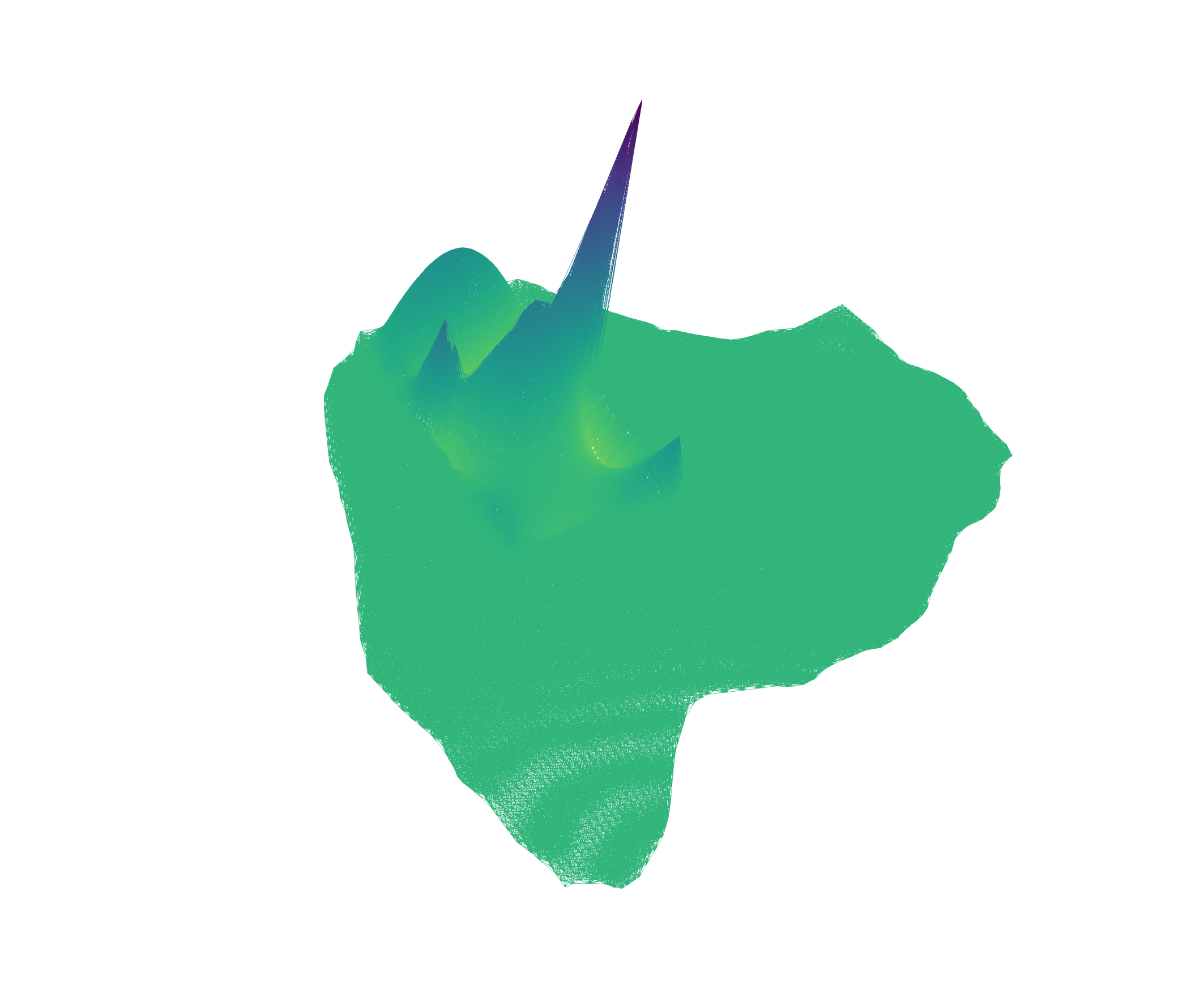}};
\end{tikzpicture}

\caption{\label{fig:BunnySamplets}The top row shows the Stanford bunny and a selected patch. The second row shows one scaling distribution (left) and a samplet on level 0 (right), and the last row shows samplets on level 0 and 1.}
\end{figure}

Using the aforementioned construction with a forest of cluster trees 
defined on each co-domain \(\phi_r(\Ucal_r)\), \(r=1,\ldots,p\), we end up
with \(p\) disjoint samplet bases.
This way, we avoid orientation
ambiguities, as each coordinate system is fixed, and samplets constructed
on \(\phi_{r}(\Ucal_r)\) respect its local geometry. Furthermore, we remark
that, if all cluster trees are balanced in the sense that \(J\sim\log(N)\)
and \(|\tau|\sim 2^{J-j_\tau}\), then the samplet basis can be constructed with
linear cost \(\Ocal(N)\).

\section{Samplet graph signal analysis}\label{sec:sampletgraphs}
In this section, we prove a decay result for the samplet coefficients of signals in classes $C^{\gamma}_G(C,v_0)$, thus justifying the compressibility of such functions in microlocal spaces. Then, we propose a compression strategy based on adaptive tree coarsening.
\subsection{Decay estimates of samplet coefficients}
Now, we establish decay estimates for the samplet coefficients of functions on graphs in our framework. We use here the spaces $C^\gamma_G$ of Definition \ref{defCalpha}. In the following, we assume that samplets have been constructed on $G$ so that the vanishing moments property holds, as described in Section \ref{sec:samplets}, up to order $\lfloor\gamma\rfloor$.
\begin{proposition}\label{propGG}
    Let $f\in C^\gamma_G(C,v_0)$, $\gamma\geq 0$, $C>0$. Let $(\mathcal{U},\phi)$ be the unique chart containing $v_0$. 
    Then, for every cluster $\tau$ that contains $v_0$, we have
    \begin{equation}\label{eqgg1graph}
        |\langle \psi_{j,k},f\circ\phi^{-1}\rangle|\leq C \max_{ v_j\in\mathcal{U}}d( v_j, v_0)^\gamma \sqrt{|\tau|}.
    \end{equation}
\end{proposition}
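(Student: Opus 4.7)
The plan is to exploit the vanishing moment property of samplets combined with the pointwise polynomial approximation afforded by the space $C^\gamma_G(C,v_0)$, and conclude via Cauchy--Schwarz and the orthonormality of samplets.

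First, I would unpack the structure of the samplet. Since $\psi_{j,k}$ is a samplet supported on the cluster $\tau$ (after pullback of Dirac measures at the embedded leaf points), we can write
\begin{equation}
\psi_{j,k} = \sum_{v_i \in \tau} \omega_i\, \delta_{\phi(v_i)},
\end{equation}
and orthonormality of the samplet basis gives $\sum_{v_i\in\tau}\omega_i^2 = 1$. Thus
\begin{equation}
\langle \psi_{j,k}, f\circ\phi^{-1}\rangle = \sum_{v_i\in\tau}\omega_i\, f(v_i).
\end{equation}

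Next, I would use the hypothesis $f\in C^\gamma_G(C,v_0)$ together with the chart $(\mathcal{U},\phi)$: by Definition~\ref{defCalpha}, there exist coefficients $c_{\bs\beta}$ such that
\begin{equation}
\Bigl|f(v) - \sum_{|\bs\beta|\leq\lfloor\gamma\rfloor} c_{\bs\beta}\bigl(\phi(v)-\phi(v_0)\bigr)^{\bs\beta}\Bigr|\leq C\, d_G(v,v_0)^\gamma
\end{equation}
for every $v\in\mathcal{U}$. Expanding $(\phi(v)-\phi(v_0))^{\bs\beta}$ shows that the approximating function is a polynomial $Q$ of degree at most $\lfloor\gamma\rfloor$ in $\phi(v)$, i.e.\ a linear combination of the monomials ${\bs X}^{\bs\alpha}$ with $|\bs\alpha|\leq\lfloor\gamma\rfloor$. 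Since samplets are assumed to have vanishing moments up to order $\lfloor\gamma\rfloor$, we get $\langle \psi_{j,k}, Q\rangle = 0$, and hence
\begin{equation}
\langle \psi_{j,k}, f\circ\phi^{-1}\rangle = \sum_{v_i\in\tau}\omega_i\bigl(f(v_i)-Q(v_i)\bigr).
\end{equation}

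Finally, Cauchy--Schwarz together with the pointwise bound yields
\begin{equation}
|\langle \psi_{j,k}, f\circ\phi^{-1}\rangle| \leq \Bigl(\sum_{v_i\in\tau}\omega_i^2\Bigr)^{1/2}\Bigl(\sum_{v_i\in\tau}|f(v_i)-Q(v_i)|^2\Bigr)^{1/2} \leq C \max_{v_j\in\mathcal{U}} d_G(v_j,v_0)^\gamma \sqrt{|\tau|},
\end{equation}
using orthonormality of $\psi_{j,k}$ and the $C^\gamma_G$-estimate termwise. The only delicate point is the interplay between the two representations of the polynomial: the one in the translated basis $(\phi(v)-\phi(v_0))^{\bs\beta}$ (natural for Definition~\ref{defCalpha}) and the one in the untranslated monomials ${\bs X}^{\bs\alpha}$ (for which the vanishing moments are stated). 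This is simply a change of basis in the finite-dimensional polynomial space of degree $\leq\lfloor\gamma\rfloor$, which preserves the degree and hence the vanishing moments argument. Everything else is essentially bookkeeping.
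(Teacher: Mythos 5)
Your proof is correct and follows essentially the same route as the paper's: expand the samplet as a normalized linear combination of Dirac measures, subtract the approximating polynomial from Definition~\ref{defCalpha} using the vanishing moments (noting, as you rightly do, that the shift $(\phi(v)-\phi(v_0))^{\bs\beta}$ is just a change of basis within the polynomial space of degree $\leq\lfloor\gamma\rfloor$), and conclude by Cauchy--Schwarz with $\sum_\ell(\omega^{(\ell)}_{j,k})^2=1$. No gaps; your remark on the translated versus untranslated monomial bases is a detail the paper leaves implicit.
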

\begin{proof}
   Let us write $\psi_{j,k}=\sum_{\ell=1}^{|\tau|}\omega_{j,k}^{(\ell)}\delta_{\bs\zeta_\ell}$, where $\bs\zeta_\ell=\phi( v_\ell)$ and we are using the notation of Section \ref{sec:samplets}. By the vanishing moments property,
    \begin{align*}
\langle\psi_{j,k},f\circ\phi^{-1}\rangle
        &=\bigg\langle\psi_{j,k},f\circ\phi^{-1}-\sum_{|\bs\beta|\leq\lfloor\gamma\rfloor}c_{\bs\beta}\big(\cdot-\phi(v_0)\big)^{\bs\beta}\bigg\rangle\\
        &=\sum_{\ell=1}^{|\tau|} \omega_{j,k}^{(\ell)}\bigg( f(v_\ell) 
        - \sum_{|\bs\beta|\leq\lfloor\gamma\rfloor}c_{\bs\beta}\big(\phi( v_\ell)-\phi(v_0)\big)^{\bs\beta}\bigg).
    \end{align*}
    By Cauchy-Schwartz, using that $\sum_{\ell=1}^{|\tau|}(\omega_{j,k}^{(\ell)})^2=1$ and \eqref{DefCaG}, we obtain \eqref{eqgg1graph}.
\end{proof}

\subsection{Compression strategy} \label{subsec:sparse}
To compress a given graph signal \(f\colon V\to\Rbb\),
we partition the signal according to the given patches
and consider \(f|_{\Ucal_r}\), \(r=1,\ldots,p\).
Next, we apply the \emph{adaptive tree coarsening}
from \cite{HM_BD97} to each restriction of the signal to the patches, i.e., \(f|_{\Ucal_r}\),
represented in samplets coordinates. More precisely,
fixing a patch, we set 
\({\bs f}_r^\Delta\isdef[f(v)]_{v\in\Ucal_r}\) and
compute \({\bs f}_r^\Sigma={\bs T}_r{\bs f}_r^\Delta\).
Next, following the procedure in \cite{HM_BD97},
we construct a subtree of \(\Scal_r\subset\Tcal_r\) with \emph{energy}
\(e(\Scal_r)\geq(1-\varepsilon^2)\big\|{\bs f}_r^\Sigma\big\|^2\)
for some \(\varepsilon>0\). Herein, the energy of a node
is defined by
\begin{equation}\label{HM_eg:energy}
e(\tau)\isdef\big\|{\bs f}^\Sigma_r|_\tau\big\|^2 +\sum_{\tau'\in\operatorname{child}(\tau)}e(\tau')
,
\end{equation}
Defining the energy this way, \(e(\tau)\) is the contribution of the subtree with root \(\tau\) to the squared norm 
\(\big\|{\bs f}^\Sigma_r\big\|^2\). In particular, we have 
\(e(\Tcal_r)=\big\|{\bs f}^{\Sigma}_r\big\|^2\).

Based on the energies \eqref{HM_eg:energy}, we next define
\begin{equation}
\tilde{e}(\tau')\isdef q(\tau)\isdef
\frac{\sum_{\mu\in\operatorname{child}(\tau)}e(\mu)}
{e(\tau)+\tilde{e}(\tau)}\tilde{e}(\tau),
\end{equation}
for all  $\tau'\in\operatorname{child}(\tau)$ and where we set \(\tilde{e}(\Ucal_r)\isdef e(\Ucal_r)\) for the root of the cluster tree. Given
this modified energy, we perform the thresholding version of the second 
algorithm from \cite{HM_BD97} with threshold 
\(\varepsilon^2\big\|{\bs f}^{\Sigma}_r\big\|^2\). This results in a subtree \(\Scal_{r}\) that approximates
\({\bs f}^\Delta_r\) up to a relative error 
of \(\varepsilon\) in the Euclidean
norm. Since the algorithm always selects either none or
all children of a given cluster, \(\Scal_{r}\) 
is a cluster tree and its leaves \(\Lcal(\Scal_{r})\)
form a partition of \(\Ucal_r\). Applying the preceding strategy to each subtree of the forest,
we obtain a signal reconstruction which approximates
\(f\) up to a relative error of \(\varepsilon\) with
respect to the Euclidean norm.

If the sustenance of the local tree structure is not
required, one may alternatively sort all coefficients
in the forest in decreasing order with respect to their
modulus and perform a relative norm-based thresholding, 
resulting in the, so called, best-$k$-term approximation
with respect to the given samplet basis.
\section{Graph partitioning and embedding} \label{sec:embedding}
This section serves as a bridge between the theoretical framework developed
so far and its practical realization. 

\subsection{Graph partitioning and coordinate maps}
To partition the graph into patches, we
employ multilevel heavy edge matching. Concretely, we rely on the $k$-way
partitioning implemented in \texttt{Metis}, see \cite{KK97,KK98}. We remark
that other approaches, such as recursive bisection using spectral clustering,
see, e.g., \cite{vLux07} would be possible as well. 

We assume that each of the resulting patches is an embedded manifold.
More specifically, the underlying manifold structure is embedded in a
certain {\em ambient space} $\mathbb{R}^d$, where possibly \(q\ll d\).
In general, the coordinate maps to embed a graph into the Euclidean space
$\mathbb{R}^q$, for a certain dimension \(q\), can then be obtained
by preserving the graph distances via multidimensional scaling,
see, e.g., \cite{book:BG05,Torgeson52} and the references therein. 
However, when an underlying manifold structure is present, we
can
exploit this and perform \textnormal{\texttt{Isomap}} \cite{TdSL00} which, roughly speaking, is a 
specific multidimensional scaling for graph embedded in manifolds.
For the resulting patches, we compare the moment matrix defined as in
\eqref{eq:mom_mat_new}, using the (unknown) embedding of the manifold,
and the moment matrix computed by the \textnormal{\texttt{Isomap}} approximation. 

\subsection{Embedding error bound}
The analysis of the embedding error is based on the results from \cite{ACJP20}.
Consider the points \(\bs x_1,\ldots,\bs x_{N_r} \in \Ucal_r \subset \Rbb^d\) stored in
the matrix \({\bs X}=[\bs x_1\cdots\bs x_{N_r}]^\intercal \in \Rbb^{{N_r}\times d}\). 
For the matrix \({\bs X}\), we define the \textit{half-width}
\(\omega({\bs X})\) as the smallest standard deviation along any direction in
space. As remarked in \cite{ACJP20}, this quantity is strictly positive if and only
if \(\bs X\) is of rank \(d\). In this case \(\omega({\bs X})
=\|\bs X^\dagger\|^{-1}/ \sqrt{{N_r}}\), with \(\bs X^\dagger\) being the
Moore-Penrose inverse. Finally, we define the
{\em maximum-radius} $\rho({\bs X})\isdef \max_{i \in \{1,\ldots,{N_r}\}}\| \bs x_i\|$.

Now, given \(\bs x_1,\ldots,\bs x_{N_r}\), \texttt{Isomap} computes points
\(\bs y_1,\ldots,\bs y_{N_r} \in \Rbb^q\) obtained by the diagonalization of the
double centered geodesic distance matrix ${\bs B} = -\frac{1}{2} {\bs H} {\bs D}
{\bs H}$, with ${\bs H} = {\bs I} - \frac{1}{{N_r}} {\bs 1}{\bs 1}^\intercal$ being the
centering matrix and ${\bs D}$ denoting the matrix of the squared geodesic
distances. 
Consequently, by considering the diagonalisation of the double centered geodesic
distance matrix ${\bs B} = {\bs U} {\bs \Lambda} {\bs U}^\intercal$, we obtain
${\bs y}_i = [(\sqrt{\lambda_1} {\bs u}_1)_i,\dots,(\sqrt{\lambda_q} {\bs u}_q)_i]^\intercal$ which are the low-dimensional representations of the initial points.

We remark that the geodesic distance \(d_{\Ucal_r}\), when the points are
quasi-uniformly distributed on the manifold, is well-approximated by the graph
distance, see \cite{BdSLT00}, whence we can use the latter instead of the real
(unknown)
geodesic distance.
In order to give a bound for the error of the embedding, we assume that the reach
\[
\begin{aligned}
\tau\isdef \sup \big\{t \geq 0 :\ \forall{\bs x} \in \mathbb{R}^d:\operatorname{dist}({\bs x}, \Ucal_r) &= t\\
\exists!{\bs v}\in \Ucal_r:\|{\bs x} -{\bs v}\| &= t \big\}
\end{aligned}
\]
of the manifold $\Ucal_r$ is fixed.
Moreover, we assume that
the points are sufficiently dense in the sense that there exists $a>0$ such that
$\min_i d_{\Ucal_r}({\bs x},{\bs x}_i)\leq a$ for each point $\bs x\in\Ucal_r$. 
In this setting, exploiting the result in \cite[Corollary 4]{ACJP20}, we obtain the following error bound.

\begin{proposition}\label{propV1}
    Let the reach of $\Ucal_r$ be fixed and let $t=c_1 a^{1/2}$, for a constant
    $c_1>0$. Further, let the points ${\bs x}_1,\ldots,{\bs x}_{N_r}$ be
    quasi-uniformly distributed on the manifold $\Ucal_r$.
    Denoting the embedded points obtained by \textnormal{\texttt{Isomap}} by ${\bs y}_1,\dots,{\bs y}_{N_r}$ 
    and the exact coordinates by ${\bs z}_1,\dots,{\bs z}_{N_r} \in \Rbb^q$,
    there holds
        \begin{align} \label{eq:errbound_isomap}
    \min_{Q \in \operatorname{O}(d)} 
\bigg[
\frac{1}{{N_r}} \!\!\sum_{i \in [{N_r}]} \| \bs y_i& - Q{\bs z}_i \|^2 \bigg]^{\frac{1}{2}} \!\!\!
\leq C\bigg( \frac{\log {N_r}}{{N_r}}\bigg)^{\frac 1 q}\!,
\end{align}
for a positive constant $C$.
\end{proposition}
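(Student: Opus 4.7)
The plan is to reduce the statement to \cite[Corollary 4]{ACJP20}, which gives a quantitative consistency estimate for \texttt{Isomap} in terms of the sampling density $a$ and the reach $\tau$ of the underlying manifold, and then convert this density-based bound into the sample-size rate stated in \eqref{eq:errbound_isomap}.

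First, I would verify that the hypotheses of \cite[Corollary 4]{ACJP20} are in place: the reach of $\Ucal_r$ is fixed by assumption, the graph is built with neighborhood parameter $t = c_1 a^{1/2}$, and the density assumption $\min_i d_{\Ucal_r}(\bs x, \bs x_i) \leq a$ holds for every $\bs x \in \Ucal_r$. Under these hypotheses, \cite[Corollary 4]{ACJP20} produces an estimate of the form
\[
\min_{Q \in \operatorname{O}(d)} \bigg[\frac{1}{N_r}\sum_{i \in [N_r]} \|\bs y_i - Q \bs z_i\|^2\bigg]^{1/2} \leq C_1 \, a,
\]
where $C_1$ depends on the reach of $\Ucal_r$ and the intrinsic geometry (diameter, dimension) but not on $N_r$. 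This is the content we transport from the cited result.

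Next, I would use the quasi-uniformity assumption to control the density $a$ in terms of $N_r$. For $N_r$ quasi-uniform points on a $q$-dimensional manifold of bounded reach, a standard covering argument on $\Ucal_r$ yields
\[
a \leq C_2 \Big(\frac{\log N_r}{N_r}\Big)^{1/q},
\]
the logarithmic factor reflecting the price of guaranteeing a covering of the manifold by balls of radius $a$ via a union bound over a finite net. The exponent $1/q$ is dictated by the intrinsic dimension of $\Ucal_r$, since $N_r$ points can fill a $q$-dimensional region only up to a covering radius of order $N_r^{-1/q}$. Substituting this bound into the previous display and absorbing constants into $C$ gives \eqref{eq:errbound_isomap}.

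The main obstacle is managing the interplay between the graph distance used by \texttt{Isomap} and the true geodesic distance on $\Ucal_r$. Although \cite{BdSLT00} ensures these two distances agree up to a small distortion when $t = c_1 a^{1/2}$ and the points are quasi-uniform, this distortion propagates through the spectral decomposition of the double-centered distance matrix $\bs B$ and must be carefully tracked so that it does not degrade the rate; this is precisely the content that \cite[Corollary 4]{ACJP20} encapsulates. The remaining care lies in ensuring the covering exponent $1/q$ is sharp under the notion of quasi-uniformity adopted here, so that the two estimates compose into the stated rate without hidden loss.
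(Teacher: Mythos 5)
Your proof follows the same route as the paper, whose entire argument is the citation of \cite[Corollary 4]{ACJP20} together with the uniform-sampling discussion in \cite[Section 5.1]{ACJP20}; your reduction to that corollary plus the covering-number conversion $a \lesssim (\log N_r/N_r)^{1/q}$ under quasi-uniformity is exactly the intended argument. The only caveat is that your intermediate restatement of the corollary as a bound of order $a$ is a paraphrase the paper never spells out, so the precise form of that step (and hence the final exponent) rests on the cited reference rather than on anything verified here.
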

\begin{proof}
The result follows from
\cite[Corollary 4]{ACJP20} under the made assumptions, cp.\
\cite[Section 5.1]{ACJP20}.
\end{proof}

\subsection{Consistency error bound for the moments}
When the graph is sampled from an underlying manifold, the coordinate
map produced by \texttt{Isomap} can be compared to the exact embedding provided by
the local coordinates whereof the manifold is endowed, as outlined in
\cite[Corollary 5]{ACJP20}. On the other hand, our analysis employs the samplet
transform on graphs, whose construction involves the definition of vanishing
moments on graphs, given in Section \ref{subsectionEF}. To verify the consistency
of our construction and justify the approximation of vanishing moments using
\texttt{Isomap}, we prove that the vanishing moments derived via \texttt{Isomap}
closely approximate those defined using the exact embedding.

\begin{proposition}
Under the assumptions of Proposition \ref{propV1}, for every multi-index $|\bs\alpha|\leq s$, there holds
 \begin{align}
    \min_{Q \in \operatorname{O}(d)} 
\bigg[
\frac{1}{{N_r}} \sum_{i \in [{N_r}]} \| \bs y_i^{\bs\alpha}- (Q \tilde{\bs z}_i)^{\bs \alpha} \|^2 \bigg]^{\frac{1}{2}}
\leq \Theta C\bigg( \frac{\log {N_r}}{{N_r}}\bigg)^{1/q},
\end{align}
    where $\Theta$ is the Lipschitz constant of the monomial $ \overline{\phi_r(\mathcal{U}_r)\cup\tilde{\phi}_r(\mathcal{U}_r)}\ni\bs x \mapsto \bs x^{\bs\alpha}\in\mathbb{R}$. Herein, \(\tilde{\phi}_r\) is
    the coordinate map obtained from \textnormal{\texttt{Isomap}}.
\end{proposition}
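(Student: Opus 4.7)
The plan is to reduce the claim to Proposition~\ref{propV1} via the Lipschitz continuity of the monomial $\bs x \mapsto \bs x^{\bs\alpha}$. I would proceed in three short steps.

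First, I would fix a multi-index $\bs\alpha$ with $|\bs\alpha|\leq s$ and note that $\bs x \mapsto \bs x^{\bs\alpha}$ is smooth, hence Lipschitz with constant $\Theta$ on the compact set $\overline{\phi_r(\mathcal{U}_r)\cup\tilde{\phi}_r(\mathcal{U}_r)}$, which contains the points $\bs y_i$ and $\tilde{\bs z}_i$ by construction.

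Second, for any $Q \in \operatorname{O}(d)$, and recalling that $Q$ is an isometry, I would apply the Lipschitz bound pointwise to obtain
\[
\bigl| \bs y_i^{\bs\alpha} - (Q \tilde{\bs z}_i)^{\bs\alpha} \bigr| \leq \Theta \| \bs y_i - Q \tilde{\bs z}_i \|, \qquad i \in [N_r].
\]
Squaring, averaging over $i$, and taking the square root yields
\[
\left[ \frac{1}{N_r} \sum_{i \in [N_r]} \bigl| \bs y_i^{\bs\alpha} - (Q \tilde{\bs z}_i)^{\bs\alpha} \bigr|^2 \right]^{1/2} \leq \Theta \left[ \frac{1}{N_r} \sum_{i \in [N_r]} \| \bs y_i - Q \tilde{\bs z}_i \|^2 \right]^{1/2}.
\]
Since this holds for every $Q$, taking the infimum over $Q \in \operatorname{O}(d)$ on both sides and invoking Proposition~\ref{propV1} delivers the asserted bound $\Theta\, C (\log N_r / N_r)^{1/q}$.

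The main obstacle, and the step most likely to need a short justification, is that $\Theta$ is the Lipschitz constant on the fixed compact set $\overline{\phi_r(\mathcal{U}_r)\cup\tilde{\phi}_r(\mathcal{U}_r)}$, while for a generic $Q$ the orbit $\{Q\tilde{\bs z}_i\}$ need not lie in this set. Since only a (near-)minimizing $Q^*$ is relevant for the final bound, and Proposition~\ref{propV1} forces $Q^*\tilde{\bs z}_i$ to deviate from $\bs y_i$ by at most $\Ocal\bigl((\log N_r/N_r)^{1/q}\bigr)$ on average, the points at which the monomial is evaluated lie in (or arbitrarily close to) the stated compact set, and the Lipschitz estimate applies unchanged. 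The remaining manipulations are elementary.
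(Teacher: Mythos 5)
Your argument is correct and follows essentially the same route as the paper, which likewise deduces the bound directly from the Lipschitz continuity of $\bs x \mapsto \bs x^{\bs\alpha}$ combined with the \texttt{Isomap} error estimate of Proposition~\ref{propV1}; your pointwise-then-average computation simply makes explicit what the paper leaves implicit. Your closing remark about the Lipschitz constant only being guaranteed on the stated compact set (rather than on the orbit of an arbitrary $Q$) is a fair observation and your resolution via the near-minimizing $Q$ is adequate.
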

\begin{proof}
    The proof directly follows from the Lipschitz continuity the
    mapping of $\bs x\mapsto {\bs x}^{\bs \alpha}$ and the error bound \eqref{eq:errbound_isomap}.
\end{proof}

\begin{remark}
In our numerical examples, we replace \textnormal{\texttt{Isomap}} by its landmark version 
(\textnormal{\texttt{L-Iso\-map}}), see \cite{ST02}.
\textnormal{\texttt{L-Isomap}} is preferred over the classical \textnormal{\texttt{Iso\-map}},
because the latter is known to be computationally and memory-intensive for large
datasets. In turn, \textnormal{\texttt{L-Isomap}} reduces the computational and memory burden of
embedding large datasets by selecting \(n \ll N_r\) {\em landmark points}. The exact
version of \textnormal{\texttt{L-Isomap}} that we use is based on \cite{DST04}, where the landmarks are
chosen according to a \textnormal{\texttt{MaxMin}} greedy selection.
\end{remark}

\section{Numerical results}\label{sec:numerics}
 In the following, we benchmark the samplet compression on graphs 
 for different examples. 
For the \texttt{L-Isomap}, we select $100$ landmark
vertices per sub-graph, computed with the MinMax greedy algorithm, see \cite{DST04}. On each embedded patch,
we perform a geometric clustering by successively subdividing the bounding box of the embedded patch
along the longest axis. The resulting cluster tree
is then employed for the samplet construction.

For each example and each embedding dimension \(q\),
we report the following quantities, the 
\emph{lost energy}, the number of vanishing moments
\(s+1\) and the number of non-zero coefficients obtained via 
the algorithms discussed in Subsection~\ref{subsec:sparse}, i.e., 
\emph{nnz (AT)} for the adaptive tree and 
 \emph{nnz (NT)} for the relative norm-based thresholding.
To compute the lost energy, we first compute
the relative trace errors of the positive part of the
spectrum of the Gram matrix of the landmark points
for each patch and then take the maximum of this
value among patches.
In our context, relative norm-based thresholding consists of
setting the smallest coefficients of \(\bs{f}^\Sigma\) to zero
such that the fraction \((1-\varepsilon)\|\bs{f}^\Sigma\|\)
of the original norm is preserved.
In our experiments, we set the threshold for the relative
norm-based thresholding and for the adaptive tree both to 
\(\varepsilon = 10^{-2}\).
Hence, we obtain for both cases a reconstructed signal with
relative compression error, measured 
in the Euclidean norm, smaller than \(1\%\).

\subsection{Embedded unit square}
In the first example, we consider a nearest-neighbor
graph for points of the unit square \([0,1]\times[0,1]\)
embedded into \(\Rbb^{100}\). To this end, 
we consider \(N=10^6\) uniformly random points on the
unit square \([0,1]\times[0,1]\) which 
are embedded into the first two components 
of \(\Rbb^{100}\) and afterwards randomly rotated.
Afterwards, each data points is perturbed 
by additive uniform random noise in \(10^{-6}[-1,1]\).
The weighted graph is then obtained by connecting each
point to its neighbors within the
\(\varepsilon=4\cdot 10^{-3}\) ball around a given point.
The weights in the graph correspond to the respective
Euclidean distances between the nearest neighbors.
This results in an averaged valence of \(50\) and
a graph with a total number of edges equal to
\(50\,110\,170\). As signal, we consider the function
\(f(\bs x)=\mathrm{exp}(-4 \|\bs x-\bs x^{*}\|)\,\cos(8 \pi \|\bs x-\bs x^{*}\|)\), where $\bs x^*$ is the center of the embedded and rotated point \([0.5,0.5]\).
The signal on the embedded graph is shown in
Figure~\ref{fig:M_square}. In this example, the samplet
forest contains a single samplet tree, i.e., we represent
the graph by a single patch. Especially, no graph
partitioning is performed here.

In Table~\ref{tab:results_M_square} we report the results
obtained via the samplet compression of the signal.
As can be inferred from the table, by increasing the embedding dimension \(q=2,3,4\), we observe a decrease
in the lost energy, since more eigenvalues of the
Gram matrix are captured as a consequence of the high-dimensional embedding. In all cases, the lost energy
is around \(10^{-3}\), which is expected, as we consider
a flat two-dimensional manifold perturbed by noise.
Especially, increasing the embedding dimension does not
significantly decrease the energy loss.

We observe a significant decrease
in both cases of nnz with increasing number of 
vanishing moments.
Regarding the number of nnz (AT), for \(q=2\), samplets with one vanishing moment, corresponding to Haar wavelets on trees, see \cite{gavish}, require \(260\,302\) 
non-zero coefficients to reach the threshold, while samplets with 5 vanishing moments only require \(1\,081\)
coefficients. The resulting compression is better by a factor of roughly \(240\). For \(q=3,4\), it becomes slightly worse, however, increasing the number of vanishing moments still yields much sparser representations. 
With norm-based thresholding, we observe the same trend as for nnz (AT). However, the sparsification of the reconstructed signal improves by a factor of \(2\) compared to the adaptive tree algorithm when \(q=2\), and by approximately \(2.5\) when \(q=3,4\).
\begin{remark}
It is notable, in this example, the critical role of working in the parametric space to mitigate the curse of dimensionality. For instance, if we consider the maximum number of vanishing moments, i.e., \( s+1 = 5 \), the cardinality \( m_s \) of the polynomial space in the ambient space is \(m_s = 4\,598\,126\) while in the parametric space it reduces to only \(70\).
This substantial reduction significantly alleviates the number of conditions that must be satisfied to achieve the same number of vanishing moments, by restricting the construction to the parametric dimension of the manifold rather than the ambient dimension.
\end{remark}

\begin{figure}[h]
\centering
\begin{tikzpicture}
  \draw(0.5,0) node{\includegraphics[scale = 0.05,clip,trim = 500 500 300 500]{./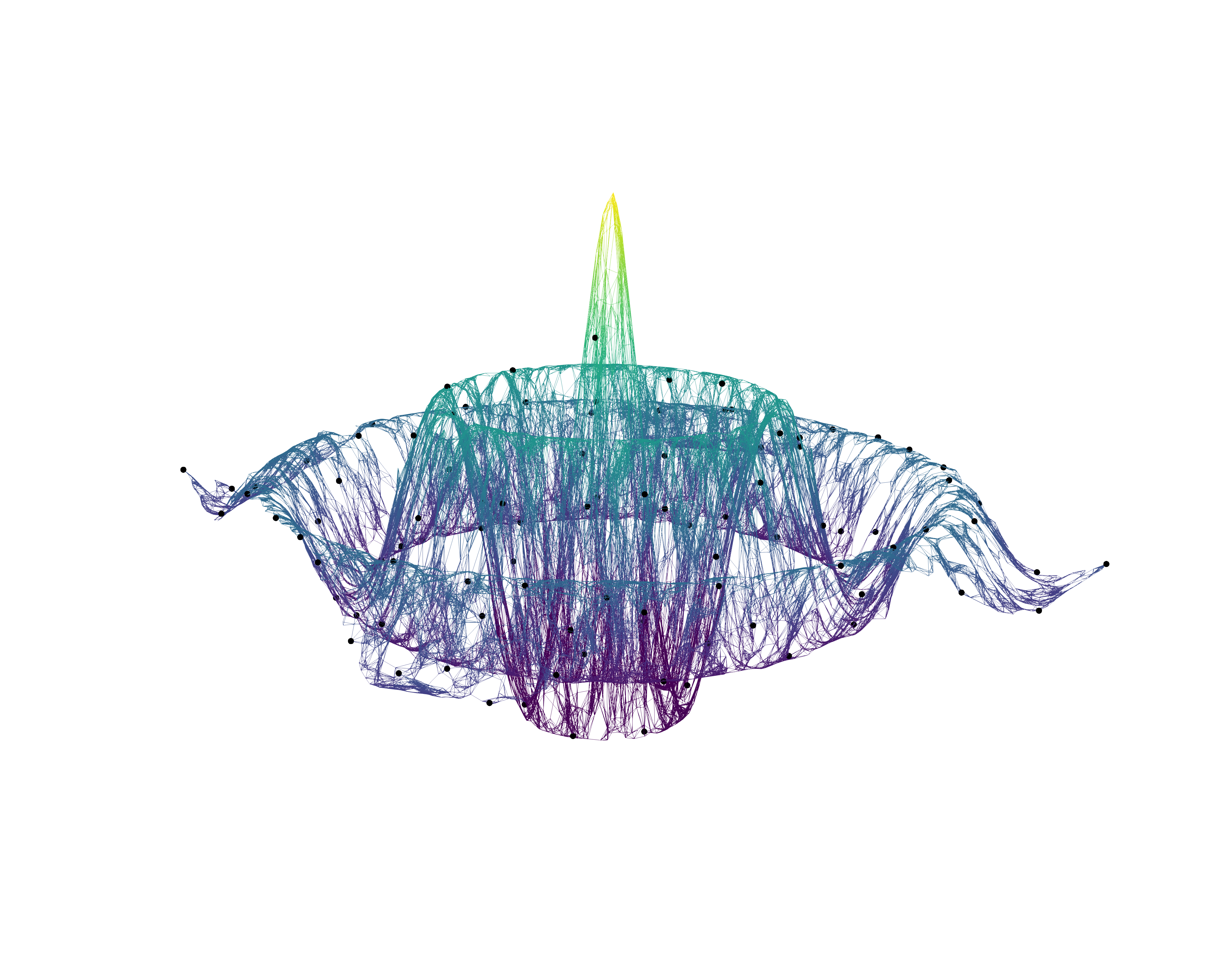}};
\end{tikzpicture}
\caption{Signal on the unit square. Black dots correspond to landmark
  vertices of the graph used for \texttt{L-Isomap}.}\label{fig:M_square}
\end{figure}

\begin{table}[h]
  \centering
  \caption{Results for the embedded unit square.}
  \label{tab:results_M_square}
    \begin{tabular}{ccrr}
      \toprule
       lost energy & $s+1$ & nnz (AT) & nnz (NT) \\
      \midrule
      \multicolumn{4}{l}{\(q=2\)} \\
      \midrule
      $1.27\cdot10^{-3}$ & 1 & 260\,302 & 153\,090 \\
      $1.27\cdot10^{-3}$ & 2 &  8\,620  & 4\,836 \\
      $1.27\cdot10^{-3}$ & 3 &  2\,535  & 1\,371 \\
      $1.27\cdot10^{-3}$ & 4 &  1\,423  & 796 \\
      $1.27\cdot10^{-3}$ & 5 &  1\,081  & 588 \\
      \midrule
      \multicolumn{4}{l}{\(q=3\)} \\
      \midrule
      $1.22\cdot10^{-3}$ & 1 & 340\,214 & 164\,204 \\
      $1.22\cdot10^{-3}$ & 2 & 17\,982  & 5\,491 \\
      $1.22\cdot10^{-3}$ & 3 &  4\,035  & 1\,674 \\
      $1.22\cdot10^{-3}$ & 4 &  2\,792  & 1\,236 \\
      $1.22\cdot10^{-3}$ & 5 &  2\,479  & 1\,034 \\
      \midrule
      \multicolumn{4}{l}{\(q=4\)} \\
      \midrule
      $1.13\cdot10^{-3}$ & 1 & 434\,124 & 182\,164 \\
      $1.13\cdot10^{-3}$ & 2 & 36\,372  & 8\,420 \\
      $1.13\cdot10^{-3}$ & 3 &  6\,665  & 2\,154 \\
      $1.13\cdot10^{-3}$ & 4 &  4\,695  & 1\,894 \\
      $1.13\cdot10^{-3}$ & 5 &  4\,832  & 1\,877 \\
      \bottomrule
    \end{tabular}%
\end{table}

\subsection{Swiss roll}
In the second example, we consider the Swiss roll
\begin{equation}
[x(t,v),y(t,v),z(t,v)]=[t\cos t,v,t\sin t],
\end{equation}
where we randomly sample \(N=10^6\)
pairs $(t,v) \in [1.5\pi,4.5\pi]\times[0,10]$ with
respect to the uniform distribution. Moreover, we 
center the resulting data sites with respect to their
mean value and rescale them such that the longest edge
of their bounding box has length 1. The resulting bounding
box is \([-9.48,12.61]\times[0,10]\times[-11.04,14.14]\).
As before, a graph is obtained by connecting each point
to its \(\varepsilon = 5\times10^{-3}\) nearest neighbors. 
This results in a total of \(52\,951\,614\) edges and an average 
valence of \(52\). 
The
weights in the graph correspond to the respective
Euclidean distances between the nearest neighbors.
The signal is given by 
\(f(\bs x)=\cos(\pi \|\bs x-\bs x^{*}\|)\),
where \(\mathbf{x}^* = [-3\pi,\,5,\,0]\) lies on the manifold. 
Figure~\ref{fig:SwissRoll} displays the resulting signal and the
chosen landmark points. For graph partitioning we employ again a single patch. 

The results are shown in Table~\ref{tab:results_SR}. 
As the embedding dimension increases from \(2\) to \(4\),
the lost energy only slightly decreases from \(8.22\times10^{-4}\)
to \(6.40\times10^{-4}\). The number of non-zeros coefficients,
exhibits a non‐monotonic dependence on \(s+1\). It is the largest for
one vanishing moment, drops sharply to a minimum at \(s+1=3\) for all
embedding dimensions. Then increases again for \(s+1=4,5\). This
indicates that samplets with three vanishing moments strike the best
balance between locality and expressiveness for the signal at hand.
Norm-based thresholding exhibits the same trend as the adaptive tree algorithm but achieves better performance in terms of sparse representation by a factor approximately of \(2\). This improvement in nnz (NT) can be attributed to the pointwise nature of norm-based thresholding, which allows for selectively retaining or discarding individual coefficients, whereas the adaptive tree strategy collects all coefficients within a subtree, including those with small magnitudes, if they belong to a quasi-optimal subtree for the sparse representation of the signal.

\begin{figure}[h]
\centering
\begin{tikzpicture}
  \draw(-0,0) node{\includegraphics[scale = 0.055,clip,trim = 800 300 800 300]{./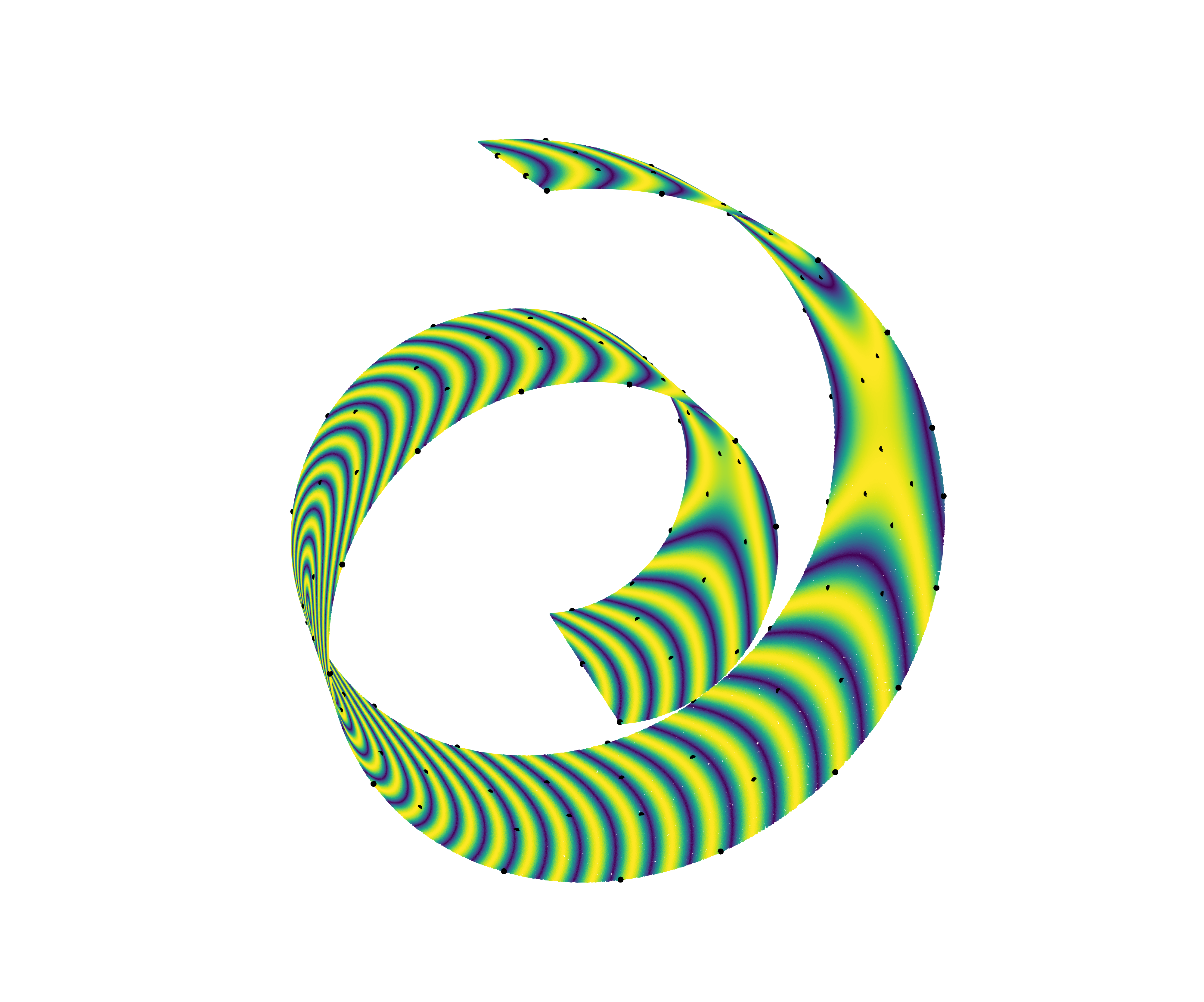}};
\end{tikzpicture}
\caption{Signal on the Swiss roll. Black dots correspond to landmark vertices
  of the graph used for \texttt{L-Isomap}.}\label{fig:SwissRoll}
\end{figure}

\begin{table}[h]
  \centering
  
  \caption{Results for the Swiss roll.}
  \label{tab:results_SR}
    \begin{tabular}{ccrr}
      \toprule
       lost energy & $s+1$ & nnz (AT) & nnz (NT) \\
      \midrule
      \multicolumn{4}{l}{$q = 2$} \\
      \midrule
      $8.22\cdot10^{-4}$ & 1 & 507\,059 & 301\,797 \\
      $8.22\cdot10^{-4}$ & 2 &  82\,979 & 42\,318 \\
      $8.22\cdot10^{-4}$ & 3 &  49\,098 & 23\,108 \\
      $8.22\cdot10^{-4}$ & 4 &  53\,141 & 22\,464 \\
      $8.22\cdot10^{-4}$ & 5 &  58\,974 & 23\,152 \\
      \midrule
      \multicolumn{4}{l}{$q = 3$} \\
      \midrule
      $7.11\cdot10^{-4}$ & 1 & 598\,974 & 293\,488 \\
      $7.11\cdot10^{-4}$ & 2 & 144\,267 & 60\,288 \\
      $7.11\cdot10^{-4}$ & 3 &  86\,631 & 39\,487 \\
      $7.11\cdot10^{-4}$ & 4 & 120\,490 & 46\,090 \\
      $7.11\cdot10^{-4}$ & 5 & 133\,278 & 54\,495 \\
      \midrule
      \multicolumn{4}{l}{$q = 4$} \\
      \midrule
      $6.40\cdot10^{-4}$ & 1 & 670\,964 & 281\,488 \\
      $6.40\cdot10^{-4}$ & 2 & 216\,569 & 80\,201 \\
      $6.40\cdot10^{-4}$ & 3 & 146\,767 & 62\,766 \\
      $6.40\cdot10^{-4}$ & 4 & 186\,112 & 83\,150 \\
      $6.40\cdot10^{-4}$ & 5 & 269\,142 & 112\,265 \\
      \bottomrule
    \end{tabular}%
\end{table}

\subsection{Stanford bunny}
As the third example, we consider a point cloud of the Stanford bunny,
containing  \(N=911\,990\) points located at the bunny's surface.
The bounding box is given by \([-0.44,0.56]\times[-0.39,0.59]\times[-0.45,0.32]\).
As in the previous examples, a graph is obtained by connecting each
point to its \(\varepsilon=6\times10^{-3}\) nearest neighbors. 
As in the previous cases, the weights in the graph correspond
to the respective Euclidean distances between the nearest neighbors.
The resulting graph has an average valence of \(44\) with a total number
of edges equal to \(40\,390\,966\).
We consider the signal \(f(\bs x)
=\mathrm{exp}(-10 \|\bs x-\bs x^{*}\|)\,\cos(20 \pi \|\bs x-\bs x^{*}\|)\),
where \(\mathbf{ \bs x}^*\) has been selected randomly and around the neck,
as shown in Figure~\ref{fig:S_bunny}. For this more intricate manifold,
we employ different values for the number of patches, i.e., \(p=50,100,150,200\). 
In Table~\ref{tab:results_bunny_d2}, we report the results for embedding
dimension \(q=2\). The lost energy significantly decreases from
\(0.3607\) for \(p=50\) to \(0.1037\) for \(p=100\), reflecting a reduced
distortion with smaller patches. The number increases again for more patches.
Increasing the number of vanishing moments shows consistently an improved
compression, with the best compression being achieved for \(200\) patches
and \(s+1=5\). Table~\ref{tab:results_bunny_d3} reports results for 
embedding dimension \(q=3\), while  Table~\ref{tab:results_bunny_d4} reports
them for \(q=4\). Qualitatively, they are similar to the \(q=2\) case, 
although the lost energy is smaller due to the higher embedding dimension.
The overall best compression is achieved for embedding dimension \(q=3\)
with \(100\) patches and \(5\) vanishing moments. 
As expected, we observe similar results for nnz (NT) in this final example where the overall best compression is achieved with the same number of patches and embedding dimension, and the general trend remains consistent. Notably, the improvement factor of nnz (NT) over nnz (AT) is larger than in the previous examples, resulting in a better sparsification of the signal.

\begin{figure}
\centering
\begin{tikzpicture}
  \draw(0.5,0) node{\includegraphics[scale = 0.035,clip,trim = 700 200 750 350]{./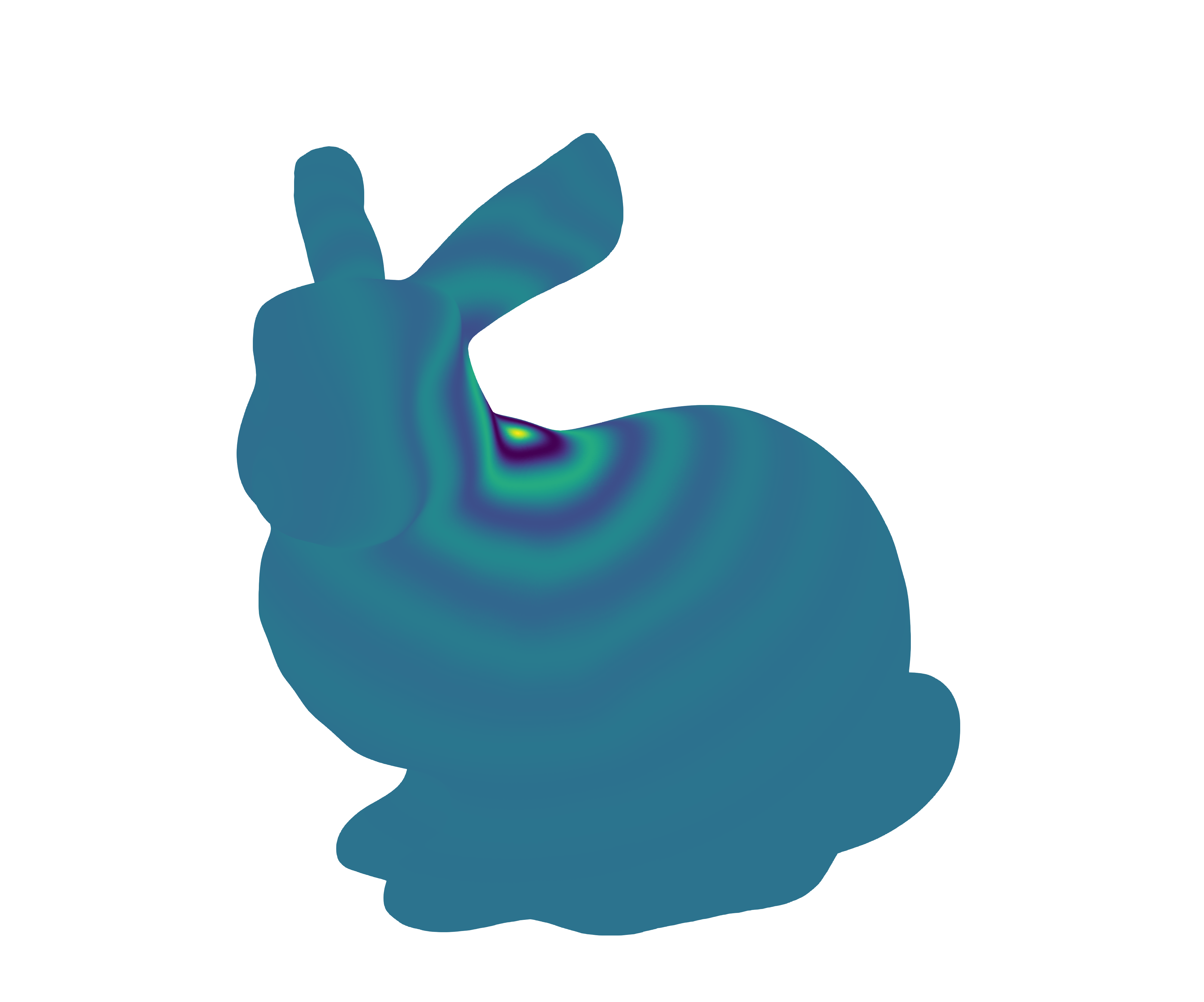}};
  \draw(-1.5,-3) node{\includegraphics[scale = 0.02,trim = 700 200 750 350, clip]{./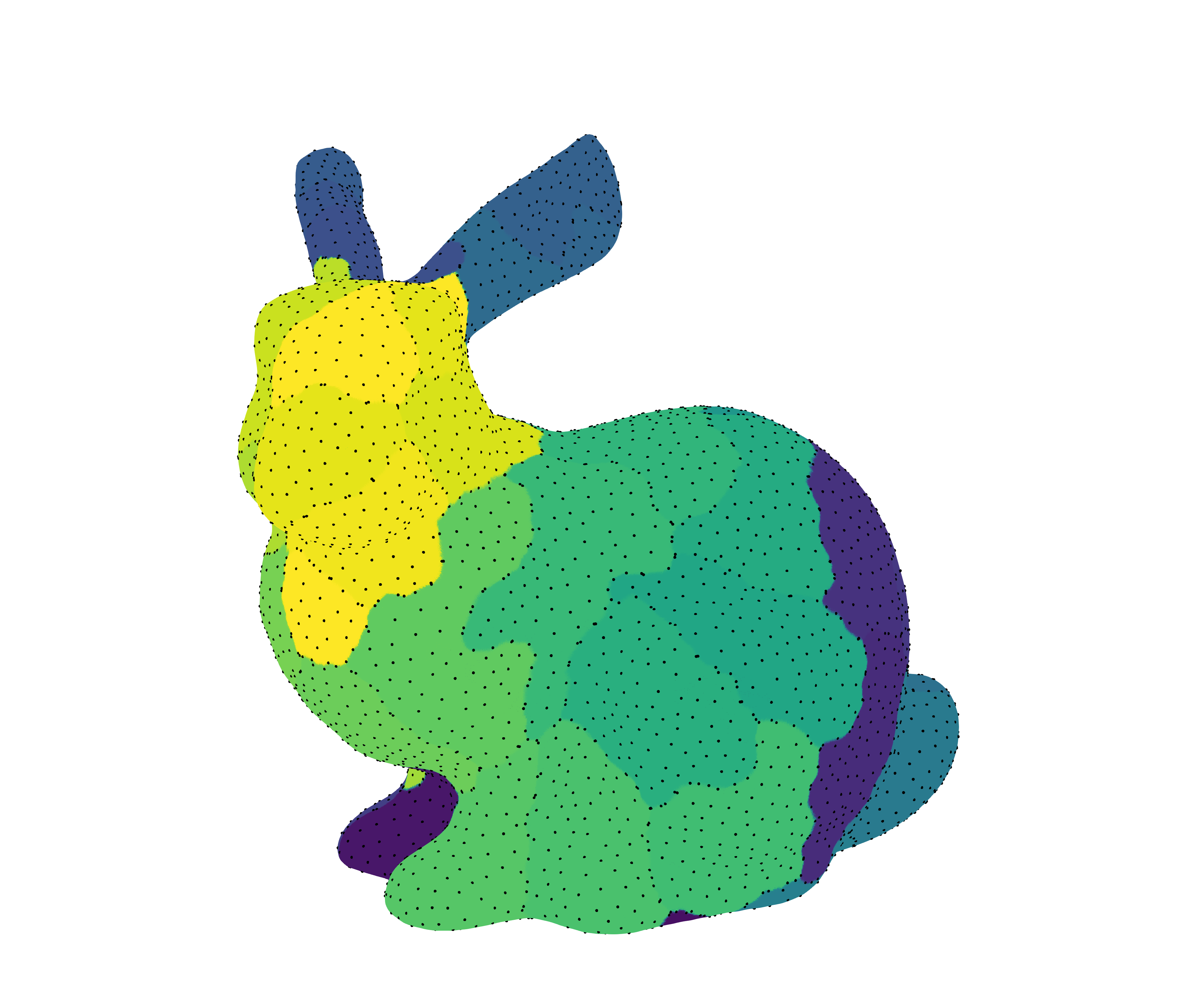}};
    \draw(2.5,-3) node{\includegraphics[scale = 0.02,trim = 700 200 750 350, clip]{./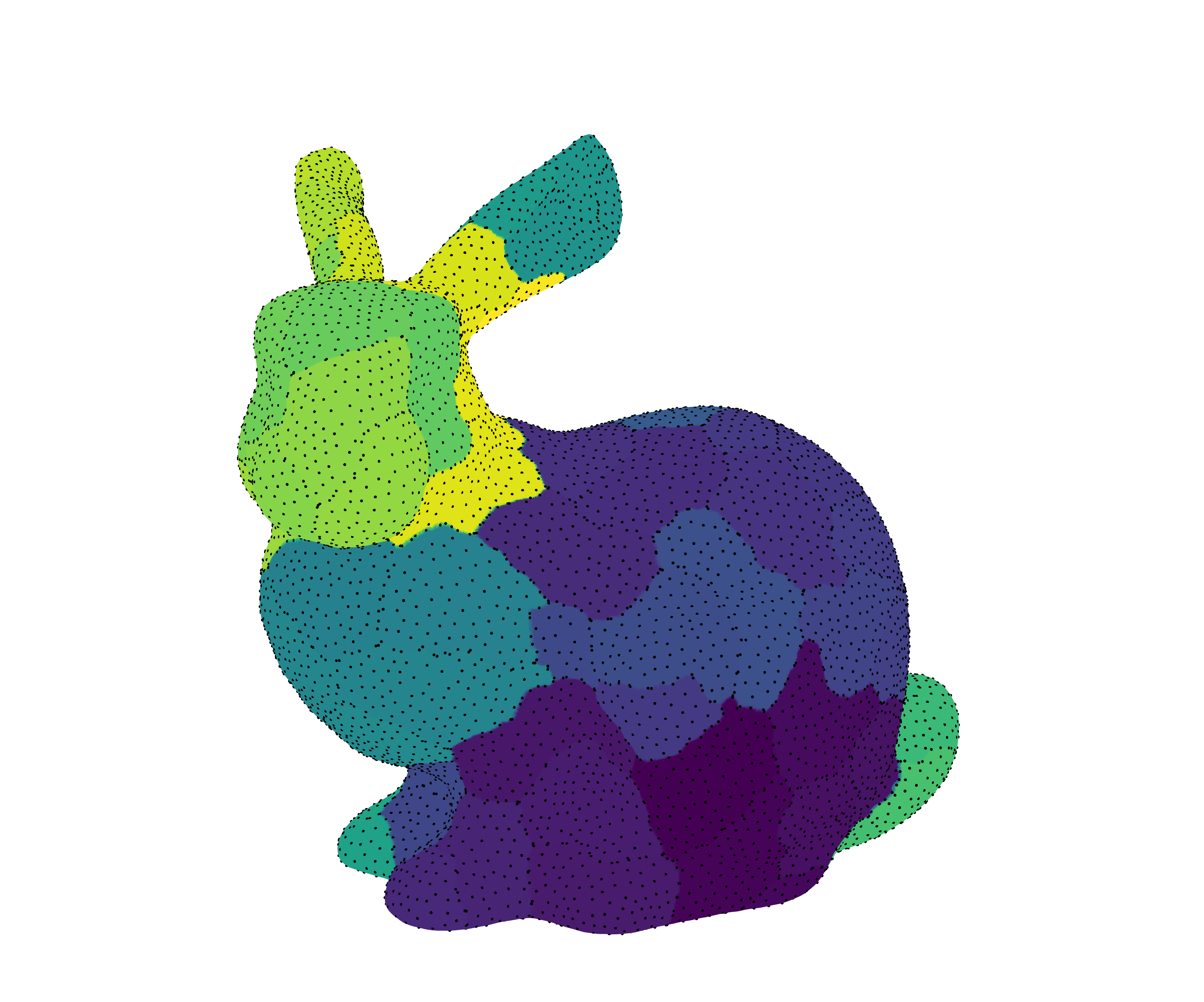}};
    \draw(-1.5,-5.5) node{\includegraphics[scale = 0.02,clip,trim = 700 200 750 350]{./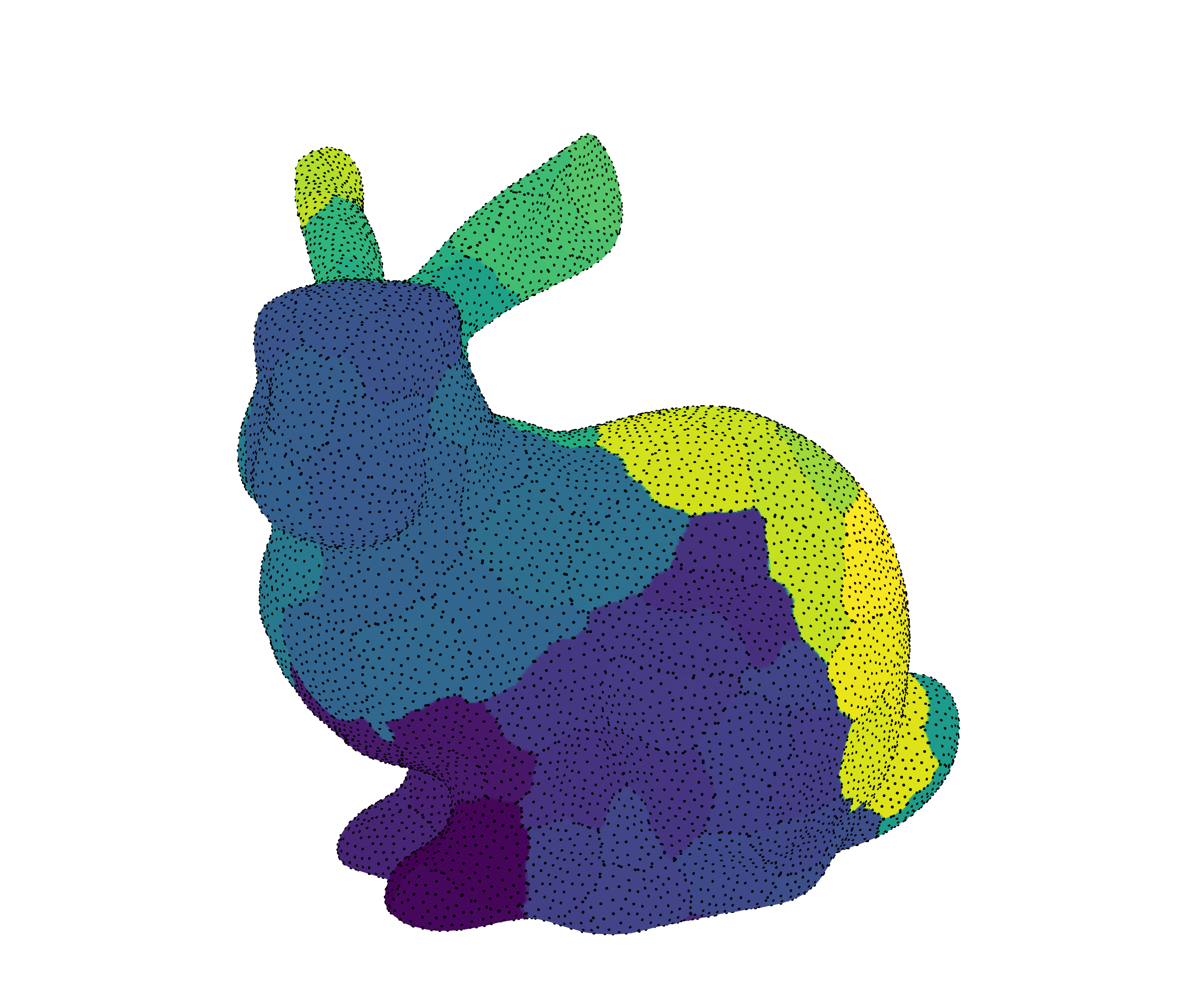}};
    \draw(2.5,-5.5) node{\includegraphics[scale = 0.02,clip,trim = 700 200 750 350]{./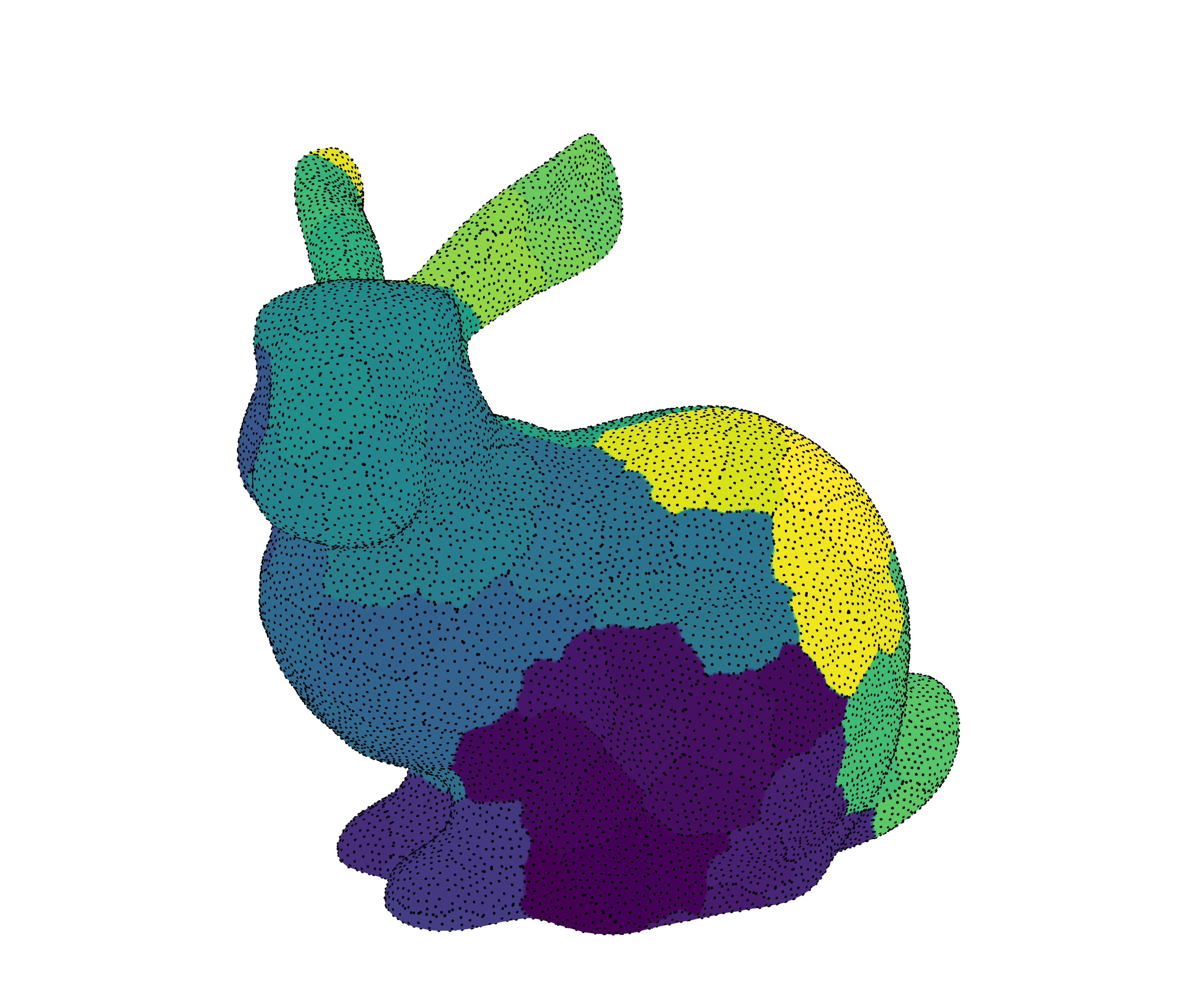}};
\end{tikzpicture}
\caption{Top image shows the signal on the Stanford bunny.
The second row shows the partitions for 50 and 100 patches, 
while the bottom row shows them for 150 and 200 patches. The
black dots correspond to landmarks.
} \label{fig:S_bunny}
\end{figure}

\begin{table}[htb]
  \centering
  \caption{Stanford bunny performances for embedding dimension 2.}
  \label{tab:results_bunny_d2}
    \begin{tabular}{ccrr}
      \toprule
       lost energy & $s+1$ & nnz (AT) & nnz (NT) \\
    \midrule
      \multicolumn{4}{l}{$p = 50$} \\
      \midrule
      $3.64\cdot10^{-1}$  & 1 & 593\,983 & 175\,724  \\
      $3.64\cdot10^{-1}$  & 2 & 184\,098 & 70\,578  \\
      $3.64\cdot10^{-1}$  & 3 & 141\,327 & 54\,564  \\
      $3.64\cdot10^{-1}$  & 4 & 132\,760 & 50\,505  \\
      $3.64\cdot10^{-1}$  & 5 & 131\,225 & 49\,352  \\
      \midrule
      \multicolumn{4}{l}{$p = 100$} \\
      \midrule
      $1.04\cdot10^{-1}$  & 1 & 614\,539 & 161\,938  \\
      $1.04\cdot10^{-1}$  & 2 & 139\,151 & 44\,214  \\
      $1.04\cdot10^{-1}$  & 3 &  89\,202 & 26\,102  \\
      $1.04\cdot10^{-1}$  & 4 &  80\,204 & 22\,021  \\
      $1.04\cdot10^{-1}$  & 5 &  76\,573 & 20\,235  \\
      \midrule
      \multicolumn{4}{l}{$p = 150$} \\
      \midrule
      $1.96\cdot10^{-1}$  & 1 & 614\,274 & 139\,510  \\
      $1.96\cdot10^{-1}$  & 2 & 109\,810 & 30\,029  \\
      $1.96\cdot10^{-1}$  & 3 &  57\,044 & 17\,448  \\
      $1.96\cdot10^{-1}$  & 4 &  47\,741 & 14\,559  \\
      $1.96\cdot10^{-1}$  & 5 &  44\,175 & 13\,604  \\
      \midrule
      \multicolumn{4}{l}{$p = 200$} \\
      \midrule
      $1.72\cdot10^{-1}$  & 1 & 614\,563 & 137\,580  \\
      $1.72\cdot10^{-1}$  & 2 &  95\,031 & 22\,604  \\
      $1.72\cdot10^{-1}$  & 3 &  38\,977 & 9\,285  \\
      $1.72\cdot10^{-1}$  & 4 &  28\,997 & 6\,600  \\
      $1.72\cdot10^{-1}$  & 5 &  25\,960 & 5\,608  \\
      \bottomrule
    \end{tabular}%
\end{table}

\begin{table}[!t]
  \centering
  
  \caption{Stanford bunny performances for embedding dimension 3.}
  \label{tab:results_bunny_d3}
    \begin{tabular}{ccrr}
      \toprule
       lost energy & $s+1$ & nnz (AT) & nnz (NT) \\
    \midrule
      \multicolumn{4}{l}{$p = 50$} \\
      \midrule
      $1.63\cdot10^{-1}$  & 1 & 579\,753 & 128\,082 \\
      $1.63\cdot10^{-1}$  & 2 & 123\,468 & 30\,416 \\
      $1.63\cdot10^{-1}$  & 3 &  56\,157 & 14\,018 \\
      $1.63\cdot10^{-1}$  & 4 &  39\,644 & 8\,106 \\
      $1.63\cdot10^{-1}$  & 5 &  31\,686 & 5\,799 \\
      \midrule
      \multicolumn{4}{l}{$p = 100$} \\
      \midrule
      $6.78\cdot10^{-1}$  & 1 & 621\,074 & 124\,617 \\
      $6.78\cdot10^{-1}$  & 2 & 122\,618 & 24\,005 \\
      $6.78\cdot10^{-1}$  & 3 &  46\,430 & 7\,438 \\
      $6.78\cdot10^{-1}$  & 4 &  27\,864 & 4\,941 \\
      $6.78\cdot10^{-1}$  & 5 &  22\,591 & 4\,337 \\
      \midrule
      \multicolumn{4}{l}{$p = 150$} \\
      \midrule
      $1.23\cdot10^{-1}$  & 1 & 623\,314 & 122\,802 \\
      $1.23\cdot10^{-1}$   & 2 & 118\,297 & 19\,378 \\
      $1.23\cdot10^{-1}$   & 3 &  41\,032 & 7\,267 \\
      $1.23\cdot10^{-1}$   & 4 &  26\,835 & 5\,226 \\
      $1.23\cdot10^{-1}$   & 5 &  25\,300 & 4\,725 \\
      \midrule
      \multicolumn{4}{l}{$p = 200$} \\
      \midrule
      $1.13\cdot10^{-1}$  & 1 & 637\,392 & 116\,987 \\
      $1.13\cdot10^{-1}$  & 2 & 119\,877 & 19\,421 \\
      $1.13\cdot10^{-1}$  & 3 &  41\,577 & 7\,090 \\
      $1.13\cdot10^{-1}$  & 4 &  27\,780 & 5\,167 \\
      $1.13\cdot10^{-1}$  & 5 &  25\,098 & 4\,633 \\
      \bottomrule
    \end{tabular}%
\end{table}

\begin{table}[!t]
  \centering
  
  \caption{Stanford bunny performances for embedding dimension 4.}
  \label{tab:results_bunny_d4}
    \begin{tabular}{ccrr}
      \toprule
       lost energy & $s+1$ & nnz (AT) & nnz (NT) \\
    \midrule
      \multicolumn{4}{l}{$p = 50$} \\
      \midrule
      $1.22\cdot10^{-1}$  & 1 & 600\,164 & 116\,623 \\
      $1.22\cdot10^{-1}$  & 2 & 145\,580 & 22\,163 \\
      $1.22\cdot10^{-1}$  & 3 &  46\,728 & 7\,794 \\
      $1.22\cdot10^{-1}$  & 4 &  27\,178 & 5\,743 \\
      $1.22\cdot10^{-1}$  & 5 &  23\,660 & 5\,418 \\
      \midrule
      \multicolumn{4}{l}{$p = 100$} \\
      \midrule
      $4.90\cdot10^{-2}$  & 1 & 634\,387 & 114\,572 \\
      $4.90\cdot10^{-2}$  & 2 & 140\,315 & 19\,567 \\
      $4.90\cdot10^{-2}$  & 3 &  45\,630 & 7\,459 \\
      $4.90\cdot10^{-2}$  & 4 &  29\,001 & 5\,590 \\
      $4.90\cdot10^{-2}$  & 5 &  26\,772 & 5\,256 \\
      \midrule
      \multicolumn{4}{l}{$p = 150$} \\
      \midrule
      $9.23\cdot10^{-2}$  & 1 & 650\,262 & 106\,155 \\
      $9.23\cdot10^{-2}$  & 2 & 144\,264 & 18\,023 \\
      $9.23\cdot10^{-2}$  & 3 &  45\,693 & 7\,344 \\
      $9.23\cdot10^{-2}$  & 4 &  31\,406 & 5\,857 \\
      $9.23\cdot10^{-2}$  & 5 &  30\,037 & 5\,636 \\
      \midrule
      \multicolumn{4}{l}{$p = 200$} \\
      \midrule
      $8.19\cdot10^{-2}$  & 1 & 651\,375 & 108\,706 \\
      $8.19\cdot10^{-2}$  & 2 & 145\,599 & 18\,747 \\
      $8.19\cdot10^{-2}$  & 3 &  49\,647 & 7\,788 \\
      $8.19\cdot10^{-2}$  & 4 &  35\,239 & 6\,102 \\
      $8.19\cdot10^{-2}$  & 5 &  34\,132 & 5\,860 \\
      \bottomrule
    \end{tabular}%
\end{table}

\section{Conclusions}\label{sec:conclusion}
We have extended the samplet framework to perform multiresolution analysis on 
graph signals. 
The resulting method provides locality, orthogonality, and higher-order vanishing 
moments. This makes it particularly effective for large-scale applications
involving graphs that exhibit an underlying manifold structure,
offering a practical tool for signal analysis across various domains.
Particularly, we have shown that any graph signal that can locally be
approximated by generalized polynomials admits a rapidly decaying samplet
expansion. In our numerical examples, the suggested approach consistently delivers
compression ratios orders of magnitude higher than classical Haar‐wavelets in both synthetic manifold examples and general graph settings.

\section*{Acknowledgment}
The authors have been funded by the Swiss National Science
Foundation starting grant
``Multiresolution methods for unstructured data'' (TMSGI2\_211684).

\bibliographystyle{plain}
\bibliography{literature}

\end{document}